\documentclass{article}
\usepackage[utf8]{inputenc}

\usepackage{authblk}
\usepackage{fullpage}

\usepackage{Lbook}

\usepackage[colorlinks=true, citecolor=blue!80, urlcolor=black]{hyperref}
\usepackage{amsmath}
\usepackage{url}
\usepackage{epsfig}
\usepackage{psfrag}
\usepackage{amssymb}
\usepackage{paralist}
\usepackage{xcolor}
\usepackage{xfrac}
\usepackage{tcolorbox}
\usepackage{listings}

\usepackage{mathptmx}
\usepackage[scaled]{helvet}   
\usepackage{courier}

\usepackage[T1]{fontenc}
\usepackage[utf8]{inputenc}

\usepackage[T1]{fontenc}
\usepackage[utf8]{inputenc}

\usepackage[
  frozencache,
  cachedir=_minted
]{minted}
\setminted{
  fontfamily=tt,
  fontsize=\small
}
\newmintinline[inlineLean]{lean4}{bgcolor=white}
\newminted[leancode]{lean4}{fontsize=\footnotesize,breaklines=true}
\usepackage{fontspec}

\newtheorem{theorem}{Theorem}
\newtheorem{proposition}{Proposition}
\newtheorem{definition}{Definition}
\newtheorem{corollary}{Corollary}
\newtheorem{observation}{Observation}
\newtheorem{lemma}{Lemma}

\newlength{\proofpostskipamount}\newlength{\proofpreskipamount}
\providecommand{\qed}{\rule[-0.2ex]{0.3em}{1.4ex}}

\newenvironment{proof}%
               {\par\vspace{\proofpreskipamount}\noindent{\bf Proof:}\hspace{0.5em}}
               {\nopagebreak%
                \strut\nopagebreak%
                \hspace{\fill}\qed\par\vspace{\proofpostskipamount}\noindent}

\newcommand{\EFX}{\mathit{EFX}}

\newcommand{\bv}{\bar{v}}
\renewcommand{\index}{\mathit{index}}

\newcommand {\abs}[1] {| #1 |}
\newcommand{\assign}{\mathbin{\raisebox{0.05ex}{\mbox{\rm :}}\!\!=}}

\newlength{\mysetspacing}
\providecommand{\sset}[1]{\{\hspace{\mysetspacing} #1 \hspace{\mysetspacing}\}}

\newcommand{\instance}{(N,M,\{v_i\}_{i \in N})}

\title{A Counterexample to EFX\\ $n \ge 3$ Agents, $m \ge n + 5$  Items, Submodular Valuations\\ via SAT-Solving}

\begin{document}
\author[1,4]{Hannaneh Akrami}
\author[1,3]{Alexander Mayorov}
\author[1]{Kurt Mehlhorn}
\author[2]{Shreyas Srinivas}
\author[1]{Christoph Weidenbach}

\affil[1]{Max Planck Institute for Informatics, Saarland Informatics Campus, Saarbr\"ucken, Germany}
\affil[2]{CISPA Helmholtz Center for Information Security, Stuhlsatzenhaus 5, Saarbr\"ucken, Germany}
\affil[3]{Saarbr\"ucken Graduate School for Computer Science}
\affil[4]{Hertz Chair for Algorithms and Optimization, University of Bonn}

\date{}    

\maketitle

\begin{abstract}
    The existence of EFX allocations is a central open problem in discrete fair division. An allocation is EFX (envy-free up to any good) if no agent envies another agent after the removal of any single good from the other agent’s bundle. We resolve this longstanding question by providing the \textbf{first-ever counterexample} to the existence of EFX allocations for agents with monotone valuations, which in turn immediately implies a counterexample for submodular valuations.
    
    Specifically, we show that EFX allocations need not exist for instances with $n \ge 3$ agents and $m \ge n+5$ goods. In contrast, we prove that every  instance with three agents and seven goods admits an EFX allocation. Both results are obtained via SAT solving. We encode the negation of EFX existence as a SAT instance: satisfiability yields a counterexample, while unsatisfiability establishes universal existence. The correctness of the encoding is formally verified in LEAN.
    
    
    Finally, we establish positive guarantees for fair allocations with three agents and an arbitrary number of goods. Although EFX allocations may fail to exist, we prove that every instance with three agents and monotone valuations admits at least one of two natural relaxations of EFX: tEFX or EF1\&EEFX.

\end{abstract}

\section{Introduction}

Fair division is a fundamental topic at the crossroads of computer science, economics, and social choice theory, and has attracted sustained interest since the pioneering work of Steinhaus \cite{S48problem}. Questions of fairness play an important role in the design of social and economic systems and emerge in a wide range of resource-allocation contexts, such as inheritance settlements~\cite{PrattZ90}, divorce settlements~\cite{BramsT96}, radio spectrum distribution~\cite{EtkinPT05}, air traffic coordination~\cite{Vossen02}, course allocation~\cite{BudishC10}, the assignment of public housing \cite{Moulin03}, and many more.

A set $M$ of $m$ \emph{indivisible} goods is to be allocated among a set $N$ of $n$ agents. Each agent $i$ has a monotone, non-negative valuation function $v_i : 2^M \to \mathbb{R}_{\ge 0}$, where $v_i(\emptyset)=0$ and $v_i(A) < v_i(B)$ whenever $A \subset B$.\footnote{Assuming distinct sets receive distinct values is without loss of generality \cite{Chaudhury-Garg-Mehlhorn:EFX,AACGMM25}.} An \emph{allocation} is a partition $(X_1,\ldots,X_n)$ of the goods, where $\bigcup_i X_i = M$ and $X_i \cap X_j = \emptyset$ for all $i \neq j$. We refer to $X_i$ as the \emph{bundle} assigned to agent $i$. We use $M = \{g_0,g_1, \ldots, g_{m-1}\}$ and $[m] = \{0,1,\ldots,m-1\}$ as the set of goods interchangably.

Agent $i$ \emph{envies} agent $j$ if $v_i(X_j) > v_i(X_i)$. In general, envy is unavoidable: for example, if two agents both desire a single good, one of them must envy the other. A stronger envy notion arises when agent $i$ continues to envy agent $j$ even after the removal of some good from $j$'s bundle, i.e., when there exists $g \in X_j$ such that
\[
v_i(X_j \setminus \{g\}) > v_i(X_i).
\]
In this case, agent $i$ is said to \emph{strongly envy} agent $j$.

This leads to the central question: does there always exist an allocation with no strong envy? Such allocations are called \emph{EFX allocations} (\emph{envy-free up to any good}), a notion introduced by Caragiannis et al.~\cite{CaragiannisKMP016}. In his CACM editorial, ``An Answer to Fair Division's Most Enigmatic Question,'' Ariel Procaccia described the problem as follows:

\begin{quote}
``Is an EFX allocation guaranteed to exist? This fundamental and deceptively accessible question is open.''
\end{quote}


Despite a substantial body of work on EFX in recent years, the existence of EFX allocations for $m$ indivisible goods and $n$ agents remains poorly understood. Positive existence results are currently known only in highly restricted settings, such as instances with a small number of agents \cite{Chaudhury-Garg-Mehlhorn:EFX,berger2021almost,AACGMM25} or special classes of valuation functions \cite{halpern2020fair,amanatidis2021maximum,CFKS23,EFXTriangleFreeMultiGraphs}.

Under the minimal assumption that valuations are monotone, only a few positive results are known. In particular, EFX allocations are guaranteed to exist when all agents have identical valuations~\cite{Plaut-Roughgarden}—which in particular implies existence for two agents\footnote{Construct an EFX allocation assuming both agents have the valuation of agent zero. Then let agent 1 choose the bundle it prefers, and give the other bundle to agent zero.}—and when the number of goods satisfies $m \le n+3$~\cite{Mahara-EFX}.

For three agents, existence has been established only under additional structural assumptions on the valuations, namely when at least one agent's valuation belongs to the MMS-feasibility class, a superclass of additive valuations~\cite{AACGMM25}. Beyond these settings, the problem remains widely open: for $n \ge 4$, it is not known whether EFX allocations always exist even when all agents have additive valuations.



The smallest open case for (unrestricted) monotone valuations is three agents and seven goods. We show that in this setting an EFX allocation always exists.

\begin{theorem}\label{thm:1}
For every fair division instance with $3$ agents and $7$ goods, there exists an EFX allocation, assuming all agents have monotone valuations.
\end{theorem}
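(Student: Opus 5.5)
The proof reduces the statement to a finite combinatorial question and settles that question by SAT solving. First I observe that EFX depends only on the \emph{ordinal} information in the valuations. Whether $v_i(X_j\setminus\{g\}) > v_i(X_i)$ holds is determined by agent $i$'s strict ranking of the $2^7=128$ subsets of $M$. By the footnote, distinct sets may be assumed to receive distinct values. Hence an instance with three agents and seven goods is, for our purposes, a triple of linear orders $\succ_0,\succ_1,\succ_2$ on $2^{[7]}$. Each order must extend set inclusion: $A\subset B$ implies $B\succ_i A$. Conversely, every such triple is realized by monotone valuations, for instance by taking ranks as values. So Theorem~\ref{thm:1} is equivalent to the following statement: for every triple of linear extensions of the subset lattice, some partition $(X_0,X_1,X_2)$ of $[7]$ has $X_i \succeq_i X_j\setminus\{g\}$ for all $i\neq j$ and $g\in X_j$.

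Next I encode the negation of this statement as a propositional formula. For each agent $i$ and each ordered pair of subsets $A\neq B$, a variable $p^i_{A,B}$ means $A\succ_i B$. The constraints are:
\begin{itemize}
\item totality and antisymmetry: exactly one of $p^i_{A,B}$ and $p^i_{B,A}$ holds;
\item transitivity clauses;
\item monotonicity units: $p^i_{B,A}$ whenever $A\subset B$.
\end{itemize}
For each of the $3^7$ allocations (fewer after symmetry reduction), I add one clause stating that it fails EFX. This clause is the disjunction, over $i\neq j$ and $g\in X_j$, of $p^i_{X_j\setminus\{g\},X_i}$. A satisfying assignment would be a counterexample. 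Unsatisfiability, certified by a DRAT proof and checked independently, proves the theorem. The correctness of the encoding, meaning that the satisfiability of the formula is equivalent to the existence of a non-EFX instance, I would verify formally in Lean, as the abstract indicates.

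The main obstacle is size. Full transitivity on 128 elements per agent gives about $3\cdot 128^3\approx 6\cdot 10^6$ clauses, which a solver may find hard to refute. I would try three reductions first.
\begin{itemize}
\item Only the comparisons that actually appear in EFX clauses matter. I would therefore keep only variables comparing a set $X_j\setminus\{g\}$ with a set $X_i$ disjoint from it. In place of full transitivity, I would impose only the consequences needed for consistency, namely acyclicity of the preference relation together with inclusion. Any acyclic relation extends to a linear order, so this weakening is sound, but acyclicity has to be encoded carefully, for example via transitivity restricted to relevant sets.
\item I would break symmetry under permutations of goods and agents, for example by fixing agent $0$'s ranking of the singletons.
\item I would use known partial results to prune cases, such as the existence of EFX when $m\le n+3$ and the envy-cycle and champion arguments, but only if they simplify the encoding without complicating the verified correctness.
\end{itemize}
If direct solving is still infeasible, I would split by cube-and-conquer on agent $0$'s ordering of small bundles and solve the pieces in parallel.
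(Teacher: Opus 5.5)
Your plan matches the paper's proof in outline. You pass to strict rankings (non-degeneracy, ranks as values), use one Boolean variable per comparison with monotonicity units and transitivity clauses, and add one no-EFX clause per allocation. You break symmetry by fixing agent~$0$'s order on singletons, certify unsatisfiability with a DRAT proof, and check the encoding in Lean.

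The substantive difference is how you shrink the transitivity block, which the paper says is essential for solvability (even with its reductions, $m=7$ took about 30 hours). The paper proves a leveling lemma. Replacing $v_i$ by a valuation that ranks every set of size at least $k=m-2=5$ above all smaller sets cannot create new EFX allocations, because in a partition into nonempty bundles a $5$-set occurs in an EFX condition only opposite $\emptyset$. The resulting unit clauses make every transitivity clause touching such a set redundant, so transitivity is generated only on sets of size at most $4$.

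Your relevant-comparison idea reaches the same formula, provided you keep variables for \emph{all} pairs of relevant sets. After deleting literals falsified by monotonicity, no set of size $\ge 5$ remains. Any total order on the sets of size $\le 4$ that respects inclusion among them extends to a linear extension of $2^{[7]}$: in a cycle alternating order steps and inclusion steps, each inclusion step joins two relevant sets and so collapses into the order. This order-extension argument is a clean alternative justification. The paper's leveled valuation has the advantage of being a statement about valuations, which is the form its Lean development formalizes.

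Avoid the literal reading of your first reduction, in which only variables for disjoint pairs $(X_j\setminus g, X_i)$ are kept and transitivity is imposed only among them. Dropping clauses never endangers soundness---an unsatisfiable weaker formula still proves the theorem---but it can make unsatisfiability unattainable.

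The paper tried restricting transitivity to triples whose pairs are nested or disjoint. It obtained a \emph{satisfiable} formula already for $m=6$, where EFX allocations are known to exist. It illustrates the failure with a $5$-cycle of $2$-subsets of a $5$-set in which consecutive sets are disjoint but no three are pairwise disjoint, so no transitivity clause applies. Your disjoint-pairs version is weaker still.

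With variables only for kept pairs, transitivity clauses forbid only $3$-cycles all of whose edges are kept, so longer cycles survive. Enforcing acyclicity needs closure variables for all pairs of relevant sets, which brings you back to the leveled encoding, or some other auxiliary encoding of a ranking.
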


In contrast, we show that this guarantee breaks down already for eight goods by constructing three valuations $v_0$, $v_1$, and $v_2$ over eight goods for which no EFX allocation exists. This is the \textbf{first-ever counterexample} to EFX existence for monotone valuations. This construction extends to larger numbers of agents, yielding counterexamples whenever the number of goods exceeds the number of agents by at least five. Moreover, since the existence of EFX allocations for monotone valuations is equivalent to their existence for submodular valuations (see Corollary~\ref{cor:submodular}), these counterexamples also apply to the class of submodular valuations.

\begin{theorem}\label{thm:2}
For $n \ge 3$ and $m \ge n+5$, there exists a fair division instance with submodular valuations for which no EFX allocation exists.
\end{theorem}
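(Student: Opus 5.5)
The proof has three parts: a base counterexample for $n=3$, $m=8$; a padding argument that extends it to all $n\ge 3$ and $m\ge n+5$; and the passage from monotone to submodular valuations.

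\textbf{Base case.} The core step is to exhibit, for $n=3$ and $m=8$, three strictly monotone valuations $v_0,v_1,v_2$ on $2^{[8]}$ admitting no EFX allocation. I would find them with a SAT encoding. For each agent $i$, boolean variables encode a strict total order on the $2^8$ subsets that extends set inclusion; only the relative order matters, since EFX is invariant under strictly monotone transformations. For each of the $3^8$ allocations, a clause asserts that some agent $i$, some other agent $j$ and some $g\in X_j$ satisfy $X_j\setminus\{g\}\succ_i X_i$. A satisfying assignment gives three orders, and realizing each order by numbers yields monotone valuations. To shrink the search I would exploit symmetries and restrict which comparisons are relevant. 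The correctness of the encoding (satisfiable iff a counterexample exists) would be certified separately, as the paper does in LEAN. I expect this step to be the main obstacle: the instance is large, and it must be made tractable by careful encoding and symmetry breaking.

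\textbf{Extension to $n\ge 3$, $m\ge n+5$.} Let $k=n-3$ and $m=n+5+r$ with $r\ge 0$. I would add $k$ new agents and $k$ new goods $h_1,\dots,h_k$, and put the $r$ surplus goods into the original three agents' world. The surplus goods are easiest to handle by changing the base case: use the SAT method to find counterexamples with $3$ agents and $8+r$ goods, or glue the surplus goods as ``duplicates'' that agents value only through a lexicographically dominant component. For the new agents, the natural choice is that agent $3+t$ values $h_t$ enormously and everything else negligibly, and the original agents value the $h_t$ negligibly, at a lexicographically lower priority. Then in any EFX allocation:
\begin{itemize}
\item each new agent must hold its own special good, or some agent holding a bundle with two goods including $h_t$ would be strongly envied;
\item the new agents hold no original goods, since an original agent would strongly envy any such bundle from which removing $h_t$ leaves something valuable.
\end{itemize}
Restricted to the three original agents, this yields an EFX allocation of the base instance, a contradiction. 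I would have to check these cases carefully, especially a new agent holding only original goods. This is settled by the huge value of $h_t$ to agent $3+t$: whoever holds $h_t$ together with anything else is strongly envied, and if someone holds exactly $\{h_t\}$ then agent $3+t$ envies that holder, so $h_t$ must go to agent $3+t$.

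\textbf{Submodularity.} Finally, I would invoke Corollary~\ref{cor:submodular}: EFX existence for monotone valuations is equivalent to existence for submodular valuations. Concretely, since only the induced orders matter, one can replace each $v_i$ by a submodular valuation inducing the same comparisons on all relevant pairs, for instance by mapping the values to $1-2^{-\text{rank}}$-style scales with large base values. The counterexample then transfers to submodular valuations.
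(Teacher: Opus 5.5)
Your base case and the final appeal to Corollary~\ref{cor:submodular} match the paper. The extension to $n\ge 4$, however, has a genuine gap, and the instance you build is not a counterexample.

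The failing step is the claim that the new agents hold no goods of $G$. Suppose agent $3+t$ holds $\{h_t\}\cup Y$ with $\emptyset\ne Y\subseteq G$. An original agent $i$ strongly envies this bundle only if $v_i(Y)>v_i(X_i\cap G)$; removing a good of $Y$ instead of $h_t$ only makes the $G$-component smaller. Nothing in your construction forces this inequality. Because your new agents are indifferent to $G$, they can absorb whatever the original agents do not want.

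Concretely, use the EFX-with-charity theorem of Chaudhury, Kavitha, Mehlhorn and Sgouritsa, which holds for general monotone valuations. It gives a partial EFX allocation $(X_0,X_1,X_2)$ of $G$ among the three original agents, with an unallocated pool $P$, $|P|\le 2$, such that $v_i(X_i)\ge v_i(P)$ for every $i$. Give $P\cup\{h_1\}$ to the new agent who wants $h_1$, and $\{h_t\}$ to the new agent who wants $h_t$ for $t\ge 2$. Then:
\begin{itemize}
\item every new agent holds its own special good and envies nobody;
\item original agents never strongly envy a singleton;
\item in $P\cup\{h_1\}$, removing $h_1$ leaves $P$, which no original agent prefers to its own bundle, and removing a good of $P$ leaves a strictly smaller $G$-component.
\end{itemize}
This allocation is EFX, so your reduction to the base instance collapses. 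Separately, your argument that $h_t$ must go to agent $3+t$ treats envy toward the singleton $\{h_t\}$ as a violation, but EFX never forbids envying a singleton.

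The missing idea is that the added agents must compete for $G$ exactly like an original agent. The paper gives agents $2,\dots,n-1$ the common valuation $T\,|S\cap H|+v_2(S\cap G)$ with $T>v_2(G)$ and $|H|=n-3$, and lets agents $0,1$ value $H$ at zero. By pigeonhole some $i^*\ge 2$ receives no good of $H$. Its bundle is worth less than $T$, so it strongly envies every bundle containing a good of $H$ together with anything else. Hence each good of $H$ forms a singleton bundle. Agents $0$ and $1$ cannot hold such a singleton: it is worth $0$ to them, while some bundle contains at least two goods of $G$. So agents $0,1,i^*$ share exactly $G$, and since $i^*$ values subsets of $G$ by $v_2$, this is an EFX allocation of the base instance, a contradiction.

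Your treatment of surplus goods is also not a proof. You cannot run the SAT solver for every $r$, and the ``duplicates'' construction is unspecified. The paper simply adds goods that every agent values at zero, $\bar v_i(S)=\bar v_i(S\cap I_0)$, so any EFX allocation restricts to an EFX allocation on $I_0$.
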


Theorems~\ref{thm:1} and~\ref{thm:2} complete the picture of EFX for three agents. As long as $m \le 7$, an EFX allocation is guaranteed to exist, and this guarantee no longer holds once $m \ge 8$. This naturally motivates the study of relaxations of EFX in the three-agent setting.

Beside studying restricted settings, a substantial line of work has considered relaxations of EFX. The most relevant notions for our work are:
\begin{itemize}
    \item \emph{Envy-freeness up to transferring any item (tEFX)}, which requires that any envy can be eliminated by \emph{transferring}, rather than removing, any single item from the envied bundle to the envious agent~\cite{choresWithSurplus23}: for all agents $i,j$ either $v_i(X_i) > v_i(X_j)$ or for all $g \in X_j$,
    \[
    v_i(X_i \cup \{g\}) \ge v_i(X_j \setminus \{g\}).
    \]

    \item \emph{Envy-freeness up to one item (EF1)}~\cite{budish2011combinatorial}: for all agents $i,j$, either $X_j = \emptyset$ or there exists $g \in X_j$ such that
    \[
    v_i(X_i) \ge v_i(X_j \setminus \{g\}).
    \]

    \item \emph{Epistemic EFX (EEFX)}~\cite{Caragiannis2023}: for every agent $i$, there exists a partition $(Y_1, \ldots, Y_{n-1})$ of $[m] \setminus X_i$ such that agent $i$ does not strongly envy any bundle $Y_j$.
\end{itemize}

While EF1 and EEFX allocations exist separately under monotone valuations~\cite{lipton2004approximately,EpistemicMonotone}, the existence of allocations satisfying both properties simultaneously (EF1\&EEFX) is known only for additive valuations~\cite{akrami2026achievingef1epistemicefx}. For general monotone valuations, the existence of tEFX allocations remains open even for $n=3$ agents.

Our counterexample does not rule out the existence of either tEFX or EF1\&EEFX allocations. In Theorem~\ref{thm:3}, we show that for every instance with three agents, at least one of these two relaxed fairness guarantees can always be achieved.

\begin{theorem}\label{thm:3}
For every fair division instance with three agents, at least one of the following holds:
\begin{enumerate}
    \item a tEFX allocation exists;
    \item an EF1\&EEFX allocation exists.
\end{enumerate}
\end{theorem}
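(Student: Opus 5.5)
We will prove Theorem~\ref{thm:3} by starting from the standard machinery for three agents, namely the potential-function / envy-graph approach of Chaudhury, Garg and Mehlhorn~\cite{Chaudhury-Garg-Mehlhorn:EFX} and its refinement in~\cite{AACGMM25}. The first step is to restrict attention to \emph{partial} EFX allocations $X=(X_0,X_1,X_2)$ with a pool $P$ of unallocated goods that are maximal with respect to a lexicographic potential (e.g.\ agent $0$'s value first, then the others). Among the local improvement rules of~\cite{Chaudhury-Garg-Mehlhorn:EFX} we keep only those that preserve EFX and strictly increase the potential; by finiteness, we reach an allocation where none of them applies. At such a fixed point, every agent that would strongly like a good of the pool already has a ``champion'' structure, meaning each pool good $g$ is most-envied by some agent in a constrained pattern. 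The known consequence is that no agent envies a pool item, i.e.\ $v_i(X_i)\ge v_i(P)$ or a similar bound, and the envy graph among the three agents is acyclic.

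The second step is to distribute the pool. We split into two cases according to whether some agent's valuation of its own bundle dominates all relevant ``bundle plus pool good'' combinations. In the first case, we add the pool goods to the sources of the envy graph. Because no agent envies the pool bundle, the envy created by this transfer is eliminated by transferring any single good back. This is the tEFX condition $v_i(X_i\cup\{g\})\ge v_i(X_j\setminus\{g\})$. The key inequality to check is that for the receiver $j$ and any envious agent $i$, removing $g$ from $X_j\cup P$ and giving it to $i$ restores non-envy. We would verify this using the EFX property of the partial allocation plus the maximality of the potential.

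In the complementary case, some agent $i$ strictly prefers $X_j\cup\{g\}$ for a pool good $g$ in a way that blocks tEFX. We then aim for EF1\&EEFX instead. EF1 is easy to maintain via envy-cycle elimination, adding goods one at a time to a source. For EEFX, we must certify, for each agent $i$, a repartition of $M\setminus X_i$ into two bundles that $i$ does not strongly envy. We take this to be the allocation that is EFX for two copies of $v_i$, which exists by~\cite{Plaut-Roughgarden}. We need $v_i(X_i)$ to be at least agent $i$'s ``EFX-share'' of the complement. Here the maximality of the potential is what yields this lower bound: if $i$'s share exceeded $v_i(X_i)$, the cut-and-choose reallocation would give a partial EFX allocation of higher potential.

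The main obstacle is making the case split exhaustive. We must show that the failure of the tEFX extension produces exactly the inequalities needed for the EEFX share bound of every agent simultaneously, including the agent that ends up with the pool goods. We would first try to handle this with a single reallocation argument, and fall back on enumerating the finitely many envy-graph configurations for three agents if it does not close.
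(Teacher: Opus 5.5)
Your proposal is a plan rather than a proof. The two steps that carry the content do not go through as stated.

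\textbf{The tEFX branch.} You assert that after the pool $P$ is given to a source $j$, every envious agent $i$ satisfies $v_i(X_i\cup\{g\})\ge v_i((X_j\cup P)\setminus\{g\})$ for all $g\in X_j\cup P$, ``because no agent envies the pool bundle.'' Non-envy of $X_j$ and of $P$ separately gives no control over $(X_j\cup P)\setminus\{g\}$, even for additive valuations. Take $v_i(X_i)=10$, $X_j=\{a,b\}$ with $v_i(a)=9.8$, $v_i(b)=0.1$, and $P=\{c\}$ with $v_i(c)=9.9$. Then agent $i$ does not strongly envy $X_j$ and envies neither $X_j$ nor $P$. Yet $v_i(\{a,c\})=19.7>10.1=v_i(X_i\cup\{b\})$, so tEFX fails at $g=b$. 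Invoking maximality of the potential only helps if you exhibit an improvement rule that fires in such a configuration and raises your lexicographic potential, and you do not.

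\textbf{The EEFX branch.} The Plaut--Roughgarden partition $(Y_1,Y_2)$ of $M\setminus X_i$ is EFX between its own two parts. A certificate, however, needs $v_i(X_i)\ge v_i(Y_k\setminus\{g\})$ for all $k$ and $g\in Y_k$, which is a comparison with $X_i$ that this partition says nothing about. Your claim that a violation yields a partial EFX allocation of higher potential is unsupported. For $i\neq 0$, a reallocation favoring $i$ may lower agent $0$'s leading component, and nothing guarantees EFX for the other two agents afterwards. Envy-cycle elimination also permutes bundles, so you have not shown that EF1 and the certificates hold for one and the same final allocation.

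\textbf{Exhaustiveness.} You explicitly leave open whether your case split is exhaustive. Neither tEFX nor EF1\&EEFX is known to exist on its own for monotone valuations, so that exhaustiveness is the whole theorem.

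\textbf{What the paper does instead.} It avoids pools altogether and moves through \emph{complete} allocations $(X_0,X_1,X_2)$. It maintains the invariant that $X_0,X_1$ are EFX-feasible for agent $0$ and $X_2$ is EFX-feasible for agent $1$ or $2$. The potential is $\min(v_0(X_0),v_0(X_1))$, and the Plaut--Roughgarden procedure for $v_0$ is the progress step. For an allocation that is EFX with respect to $v_0$, tEFX follows by reassigning bundles unless agents $1$ and $2$ each find exactly one bundle tEFX-feasible, and it is the same bundle $X_2$. In that configuration, a case analysis compares $X_2\setminus g$ with $X_0\cup g$ and with $X_1$ under $v_1$ and $v_2$. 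Either tEFX follows directly, or some $g$ is moved from $X_2$ to $X_0$ and the result is trimmed to an inclusion-minimal subset. The new partition then has one of three outcomes:
\begin{itemize}
\item it gives tEFX;
\item it strictly raises the potential, possibly after running Plaut--Roughgarden;
\item when agent $2$'s favorite bundle in it is $X_1$, it is itself the EEFX certificate for giving $X_1$ to agent $2$, with EF1 for that bundle following from the failure of the earlier cases.
\end{itemize}
This invariant/potential pair, together with the locally constructed certificate, is what your plan is missing.
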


Finally, for a more comprehensive overview of the related literature, we refer the reader to the survey by~\cite{survey2022}.

\subsection{Our Techniques}

We obtain our results through a combination of theoretical analysis and computer-assisted SAT solving. We encode the negation of EFX existence as a SAT instance: satisfiability yields a counterexample, while unsatisfiability establishes universal existence. A direct encoding of the three-agent, seven-good case as a satisfiability problem yields a CNF formula with approximately 25{,}000 variables and more than 6 million clauses. 

After preliminary experiments studying how the number of goods affects the runtime of the SAT solver, we investigated structural properties of the problem that allow for substantially more compact encodings. For this purpose, we use the case of three agents and six goods—where the existence of an EFX allocation is known~\cite{Mahara-EFX}—as a guiding benchmark for the analysis. This is presented in Section~\ref{sec:theoefx}, while the resulting encoding is described in Section~\ref{sec:encoding}.

The theoretical refinements reduce the formula for three agents and seven goods to approximately 12{,}000 variables and 2.5 million clauses. Applying standard SAT preprocessing techniques, in particular unit propagation and subsumption elimination, further reduces the clause set to approximately 700{,}000 clauses. 
As an illustration, fixing the ordering of items for agent zero (see Sections~\ref{item order} and~\ref{ordering}) introduces unit clauses. Any clause containing a satisfied literal from a unit clause can be removed, while clauses containing its negation can be simplified by deleting that literal. This simplification step can trigger further reductions, as newly shortened clauses may in turn enable additional unit propagations and subsumptions.

SAT solvers are complex and relatively large pieces of software, typically consisting of more than 20k lines of code. At the SAT level, we use our in house SAT solver SPASS-SAT~\cite{BrombergerEtAl19} to obtain the main results, and we additionally employ the SAT solver CaDiCaL~\cite{BiereFFFFP24} to validate them.

Whenever an unsatisfiability result is obtained, we further verify the corresponding proof using the proof checker DRAT-trim~\cite{WetzlerHH14}. DRAT-trim is a significantly smaller and simpler piece of software that has been widely used to certify results in the annual SAT competition over many years. In the case of satisfying assignments, correctness is checked using lightweight, EFX-specific verification tools.

The encoding generator itself is also a small implementation that follows the theoretical development of the encoding described in Section~\ref{sec:encoding} in a direct and faithful manner.

For three agents and seven goods, unsatisfiability of the SAT encoding of EFX was established in approximately 30 hours using SPASS-SAT. This result was independently confirmed by CaDiCaL, and the corresponding unsatisfiability proof was verified using DRAT-trim.

We then considered the case of three agents and eight goods. In this setting, SPASS-SAT found a satisfying assignment in approximately 20 hours, which was independently confirmed by CaDiCaL. We translated the resulting model back into agent valuations. 
Each valuation corresponds to a linear order over all subsets of the eight goods, which, using Proposition~\ref{prop:submodular}, can be transformed into submodular valuations. We verified the violation of EFX using three independent implementations. 
The experimental setup and detailed results are discussed in Section~\ref{Experiments}.

In Section~\ref{sec: more}, we extend the counterexample to $n \ge 4$ agents and $m \ge n + 5$ goods. 

In Section~\ref{sec:pos-results}, we prove Theorem~\ref{thm:3}. Similar to~\cite{EFX-Simpler-Approach}, we extend the cut-and-choose protocol to the setting of indivisible goods and three agents. Our algorithm moves in the space of complete allocations, and iteratively improves a suitable potential function as long as neither of the desired fairness guarantees is satisfied.

In Section \ref{sec:formver}, we formalize the constructions from Sections~\ref{sec:theoefx} and~\ref{sec:encoding} in the Lean theorem prover~\cite{lean4ITP}.

Our work rises many interesting problems and suggests follow up work which we discuss in Section~\ref{open problems}.


The appendix contains additional theoretical foundations that were not required for the final encoding, but may be useful for studying EFX beyond three agents. Our choice of tools was guided primarily by familiarity. All tools were run with default configurations and without additional tuning. It is possible that further performance improvements could be achieved through more extensive experimentation. Our main objective, however, was to establish (un)satisfiability of the generated CNF instances.

All software implementations and formal LEAN proofs are publicly available. The artifacts associated with this paper can be accessed at \url{https://nextcloud.mpi-klsb.mpg.de/index.php/s/25x4Q8eQErYsZE4} (190MB) for the SAT encoding, \url{https://nextcloud.mpi-klsb.mpg.de/index.php/s/XEoGiPXSL6MJpyr} (18.5GB) for the compressed DRAT proof for three agents and seven goods, and at \url{https://zenodo.org/records/18637095} for the formal LEAN development.

In addition to the SAT-based formulation, we also experimented with a modeling approach based on Satisfiability Modulo Theories (SMT), more specifically SAT modulo Linear Real Arithmetic (LRA), see Section~\ref{SMT encoding}. This encoding is significantly more compact, consisting of only a few thousand constraints. Using this formulation, we were able to verify the existence of EFX allocations for six items in approximately 25 seconds. However, for seven items, the Z3~\cite{Moura:TACAS08-337} SMT solver did not terminate within 40 hours.

\subsection{History of the Paper and Subsequent Work}
In the first version of this paper, we presented the first-ever counterexample to EFX existence for monotone valuations. Very recently, after that version appeared, \cite{mackenzie2026} (motivated by our work, as they explicitly mention) presented a counterexample to EFX existence for submodular valuations. Consequently, we found it important to revise the paper to state explicitly that the EFX existence problem for monotone and submodular valuations is equivalent (Corollary~\ref{cor:submodular}), a fact we were independently aware of through direct communication with Bhaskar Ray Chaudhury and Uriel Feige. The work of \cite{mackenzie2026} establishes several other notable results, including the non-existence of approximate EFX guarantees for submodular valuations and a simpler and human-verifiable counterexample to EFX. We discuss the significance of the latter in Section~\ref{open problems}.

\section{Preliminaries}\label{sec:prelim}

A fair division instance is denoted by $\instance$, where 
$N = \{0,1,\ldots,n-1\}$ is the set of agents, 
$M$ is the set of indivisible goods, and 
$v_i:2^M \rightarrow \mathbb{R}_{\geq 0}$ is the valuation function of agent $i \in N$. 
For notational convenience, we often write $g$ in place of the singleton set $\{g\}$. 

\begin{definition}
A set function $f:2^M \rightarrow \mathbb{R}_{\geq 0}$ is \emph{submodular} if for all $S,T \subseteq M$,
\[
f(S) + f(T) \geq f(S \cup T) + f(S \cap T).
\]
Equivalently, $f$ is submodular if for all $S \subset T \subseteq M$ and all $g \in M \setminus T$,
\[
f(S \cup \{g\}) - f(S) 
\geq 
f(T \cup \{g\}) - f(T).
\]
That is, the marginal contribution of a good weakly decreases as the underlying set grows.
\end{definition}

The following proposition shows that any monotonic total ordering over subsets can be realized by a submodular valuation. Consequently, for fairness notions that depend only on pairwise comparisons between bundles---such as EF, EFX, tEFX, EF1, and EEFX---the existence question over monotone valuations is equivalent to the existence question over monotone submodular valuations. In contrast, cardinal notions such as proportionality\footnote{An allocation $X$ is proportional, if for all agents $i$, $v_i(X_i) \geq v_i(M)/n$.} and its relaxation maximin share depend on the exact numerical values and therefore are not covered by this equivalence.

\begin{proposition}[Bhaskar Chaudhury and Uri Feige, personal communication]\label{prop:submodular}
Let $M$ be a finite set and let $\prec$ be a total ordering on $2^M$ satisfying
$S \subset T \iff S \prec T$. Then there exists a submodular function $f:2^M \rightarrow \mathbb{R}_{\geq 0}$
such that $S \prec T \implies f(S) < f(T)$.
\end{proposition}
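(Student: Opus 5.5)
The plan is to build $f$ as a large constant times the cardinality plus a small order-respecting perturbation. Let $K = 2^{|M|}$ and let $r: 2^M \to \{0,1,\ldots,K-1\}$ be the rank of a set in the total order $\prec$, so $S \prec T \iff r(S) < r(T)$. Define
\[
f(S) = C\cdot g(|S|) + r(S),
\]
where $g$ is a strictly increasing, strictly concave function on $\{0,1,\ldots,|M|\}$ and $C$ is a large constant. The intended choice is $g(k) = k|M| - k(k-1)/2$, whose increments $g(k+1)-g(k) = |M|-k$ are positive integers that strictly decrease. We have $f \ge 0$ because $g(0)=0$, $g$ is increasing, and $r \ge 0$.

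First I check that $f$ respects $\prec$. The hypothesis says only that $\prec$ extends strict inclusion; it does not say that $\prec$ is compatible with cardinality. So we cannot let $C$ dominate the comparison across different cardinalities. If $C$ were large, a set $T$ of size $k$ and a set $S$ of size $k+1$ with $T \succ S$ would get $f(T) < f(S)$, which is wrong. The ordering term must therefore dominate, and concavity has to come from a small term. I will instead take
\[
f(S) = r(S) + \varepsilon\, g(|S|)
\]
with $\varepsilon>0$ small, for example $\varepsilon < 1/(2\max g)$. Then $|f(S)-r(S)| < 1/2$, and since ranks are distinct integers, $S \prec T$ implies $f(S)<f(T)$.

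The main obstacle is submodularity, because the rank term $r$ is arbitrary and can badly violate diminishing marginals. A tiny concave perturbation cannot repair that. I therefore need a different decomposition: replace $r$ by some order-preserving transform $\phi(r(S))$ that has controlled marginals. The key observation is that monotonicity along inclusion gives, for $S\subset T$ and $x\notin T$, the chain $S \prec S\cup x$ and $T \prec T\cup x$. Submodularity requires
\[
f(S\cup x)-f(S) \ge f(T\cup x)-f(T).
\]
This holds automatically if every value lies in a narrow band $[A, A+\delta]$ for nonempty sets with $f(\emptyset)=0$, since the marginals at $\emptyset$ are huge. That handles only $S=\emptyset$, though. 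For the general case I make the values cluster so that all marginals are nearly equal: set
\[
f(S) = L + r(S) \quad\text{for } S \ne \emptyset, \qquad f(\emptyset)=0,
\]
with $L$ large. Then for nonempty $S\subset T$, the two marginals are rank differences in $(-K,K)$, with no sign control between them. So this alone fails as well.

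The fix I would try combines both ideas: $f(S) = L\,h(|S|) + r(S)$, where $h$ is strictly concave with gaps $h(k+1)-h(k)$ decreasing by at least $1$ each step, and $L > 2K$. For $|T|>|S|$ the concavity gap $L\,(h(|S|+1)-h(|S|) - h(|T|+1)+h(|T|)) \ge L > 2K$ dominates any rank discrepancy, so submodularity holds. Every submodularity test with $S\subsetneq T$ has $|T|>|S|$, so no case is left open. Order preservation is the remaining issue: $h$ grows with cardinality, which may conflict with $\prec$ across sizes. I would handle this by noting that the statement only requires $f$ to be strictly increasing along $\prec$, and then modifying $r$. I would redefine $r$ so that rank steps between consecutive sets in $\prec$ are huge, namely $r$ takes values $c^{\text{rank}}$ with $c$ large. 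The part to verify carefully, which I expect to be the real difficulty, is choosing the growth rates of $r$ and of $L\,h$ so that both requirements hold at once.
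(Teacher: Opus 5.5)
\textbf{There is a genuine gap.} Your proposal stops short of a proof. The step you flag as ``the real difficulty'' is the whole content of the statement, and the direction you suggest for it cannot work.

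Making the rank steps grow, via $c^{\mathrm{rank}(S)}$ with $c$ large, makes the rank term \emph{super}modular along chains. Because $S\cup x\prec T\cup x$, its marginal at $T$ dwarfs its marginal at $S$. The cardinality term then has to be huge, and it overrides $\prec$. Concretely, write $m=|M|$, $K=2^{m}$, $e_k=h(k+1)-h(k)$, and take $T=M\setminus\{x\}$, $S=T\setminus\{z\}$. Submodularity of $f=L\,h(|\cdot|)+c^{\mathrm{rank}(\cdot)}$ on this test requires
\[
L\,(e_{m-2}-e_{m-1})\;\ge\;c^{K-1}-c^{\mathrm{rank}(T)}-c^{\mathrm{rank}(S\cup x)}+c^{\mathrm{rank}(S)}\;>\;c^{K-2}(c-2),
\]
and the right-hand side is at least $c^{K-2}$ once $c\ge 3$. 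Now take any $A\prec B$ with $|A|>|B|$, for example $\{g_0,g_1\}\prec\{g_2\}$; the hypothesis allows such pairs, and you rightly refuse to exclude them. Since $A\neq M$, order preservation needs $c^{\mathrm{rank}(B)}-c^{\mathrm{rank}(A)}>L\,(h(|A|)-h(|B|))\ge L\,e_{|A|-1}\ge L\,e_{m-2}>c^{K-2}$. But the left-hand side is below $c^{K-2}$ because $B\neq M$. So within this family the two requirements are incompatible, and no choice of growth rates for the rank and cardinality terms can satisfy both.

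\textbf{The missing idea} is to go the other way. Drop the cardinality term and apply a \emph{concave} transform to the rank alone, with increments shrinking so fast that each exceeds the sum of all later ones. List the sets as $\emptyset=S_0\prec S_1\prec\cdots$ and set $f(S_i)=1-2^{-i}$; then $f\ge 0$ and order preservation are immediate. For $S_i\subsetneq S_j$ and $x\notin S_j$, inclusion-monotonicity of $\prec$ gives $j\ge i+1$, $S_i\prec S_i\cup x$ and $S_j\prec S_j\cup x$. Hence the marginal of $x$ at $S_i$ contains the increment $2^{-(i+1)}$. Its marginal at $S_j$ is a sum of increments $2^{-\ell}$ with $\ell>j$, which is less than $2^{-j}\le 2^{-(i+1)}$. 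This is exactly the paper's proof, and it never has to reconcile cardinality with $\prec$.
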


\begin{proof}
Let $\emptyset = S_0 \prec S_1 \prec \cdots \prec S_{2^{|M|}-1} = M$ be the subsets of $M$ ordered according to $\prec$ which respects monotonicity. Define $f$ inductively by

$$f(S_0)=0, \qquad f(S_{i+1}) = f(S_i) + 2^{-(i+1)} \quad \text{for all } i \geq 0. $$
Equivalently, $f(S_i)=\sum_{\ell=1}^{i}2^{-\ell}$.

By construction, $f(S_i) < f(S_j)$ whenever $i<j$, and hence $S \prec T \implies f(S)<f(T)$. 
It remains to prove submodularity. Consider sets $S_i \subset S_j$ and an element
$a \in M \setminus S_j$. Let
\[
S_{i'} = S_i \cup \{a\},
\qquad
S_{j'} = S_j \cup \{a\}.
\]
Since $S_i \subset S_j$, we have $i<j$, and by monotonicity of the ordering, $i < i'$ and $j < j'$. 
The marginal contribution of $a$ to $S_i$ is
\begin{align*}
f(S_i \cup \{a\}) - f(S_i)
&= f(S_{i'}) - f(S_i) = \sum_{\ell=i+1}^{i'} 2^{-\ell} = 2^{-i} - 2^{-i'} 
\end{align*}
and therefore
\begin{align}
    f(S_i \cup \{a\}) - f(S_i)
    \geq 2^{-(i+1)}. \label{eq:marginal_i}
\end{align}

Similarly,
\begin{align*}
f(S_j \cup \{a\}) - f(S_j)
&= 2^{-j} - 2^{-j'} 
\end{align*}
and hence
\begin{align}
    f(S_j \cup \{a\}) - f(S_j)
    < 2^{-j}. \label{eq:marginal_j}
\end{align}

Since $j \geq i+1$, we obtain $2^{-(i+1)} \geq 2^{-j}$. Combining this with \eqref{eq:marginal_i} and \eqref{eq:marginal_j} yields
\[
f(S_i \cup \{a\}) - f(S_i)
>
f(S_j \cup \{a\}) - f(S_j).
\]
Thus $f$ satisfies diminishing marginal returns and is therefore submodular.
\end{proof}
\begin{corollary}[of Proposition \ref{prop:submodular}]\label{cor:submodular}
For any comparison-based fairness notion (e.g., EF, EFX, tEFX, EF1, or EEFX), existence over monotone valuations is equivalent to existence over submodular valuations.
\end{corollary}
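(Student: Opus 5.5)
We prove the two directions of the equivalence separately, after first making precise what ``comparison-based'' means. A fairness notion is comparison-based if, for every allocation $X$, whether $X$ satisfies the notion depends only on the outcomes of comparisons $v_i(A) \le v_i(B)$ between bundles $A,B \subseteq M$, for each agent $i$. EF, EFX, tEFX, EF1, and EEFX all have this form: each is a Boolean combination, with quantifiers over goods, agents, and partitions, of such comparisons.

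The first direction is that existence for all monotone instances implies existence for all submodular instances. The paper uses monotone in the strict sense, $v_i(A) < v_i(B)$ whenever $A \subset B$. The function $f$ produced by Proposition~\ref{prop:submodular} is strictly monotone, since $S \subset T$ gives $S \prec T$ and hence $f(S) < f(T)$. Submodular valuations in this setting are therefore monotone as well, so every submodular instance is already a monotone instance and this direction is immediate.

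For the converse, we take an arbitrary monotone instance $\instance$. By the footnoted convention we may assume that distinct sets receive distinct values. Each $v_i$ then induces a total order $\prec_i$ on $2^M$, given by $S \prec_i T \iff v_i(S) < v_i(T)$. Strict monotonicity of $v_i$ gives $S \subset T \Rightarrow S \prec_i T$, which is the hypothesis of Proposition~\ref{prop:submodular}; only this implication is used in its proof. Applying the proposition to each $\prec_i$ yields a submodular $f_i$ with $S \prec_i T \Rightarrow f_i(S) < f_i(T)$. Because $\prec_i$ is total, this implication is in fact an equivalence, so $f_i$ and $v_i$ induce exactly the same comparisons: $v_i(A) \le v_i(B) \iff f_i(A) \le f_i(B)$. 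By assumption, the submodular instance $(N,M,\{f_i\}_{i\in N})$ has an allocation satisfying the notion. Since that notion depends only on these comparisons, the same allocation satisfies it under $\{v_i\}$.

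The main obstacle is the tie-breaking step. If we drop the distinct-values assumption, $v_i$ induces only a weak order. We would then refine it to a total order by an arbitrary tie-break that is consistent with strict inclusion, which is possible because $v_i$ already strictly respects $\subset$. The difficulty is that satisfaction of the notion is not automatically preserved under refinement, since ties can be broken against the allocation found. The cleanest route is to rely on the paper's cited without-loss-of-generality reduction to distinct values, and then argue as above.
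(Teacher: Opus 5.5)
Your argument is correct and is essentially the paper's own: the paper obtains the corollary by realizing each agent's order (strict, by the distinct-values convention of its first footnote) with Proposition~\ref{prop:submodular} and observing that comparison-based notions depend only on these orders. The other direction is immediate in the paper because valuations are monotone by definition.

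Two small points. First, that trivial direction rests on this definition, not on the monotonicity of the particular $f$ built in the proposition, so your ``therefore'' there is a non sequitur. Second, your tie-breaking worry disappears, since the proposition's proof works verbatim for a weak order that strictly respects $\subset$: give each tie class a single value, increasing by $2^{-\ell}$ between consecutive classes. All comparisons, ties included, are then preserved, and no appeal to the distinct-values reduction is needed.
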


\section{Theoretical Foundations  of the  EFX Encoding} \label{sec:theoefx}

In this section we provide some theoretical foundations of EFX that will eventually improve the encoding of EFX in SAT, i.e., the size of the CNF and the runtime of an afterward applied SAT-solver.

In general, $m$ items can be allotted to $n$ agents in $n^m$ ways. For $n = 3$ and $m = 7$ we have $3^7 = 2187$ distinct allocations, a manageable number.
Actually, the number is less since all bundles must be non-empty in an EFX allocation, see Section~\ref{allocations}. However, the number of valuations is enormous.
Note that a set of size seven has $\binom{7}{3} = 35$ subsets of three elements and that the values of these subsets may have any order. So the number of valuations is at least $35! \approx (35/e)^{35} \ge 10^{35}$. 

We use two techniques for restricting the search space; see Sections~\ref{item order} and~\ref{leveled valuations}. These techniques are essential for solvability; see Section~\ref{Experiments}. The following propositions are useful.

\begin{proposition}[\cite{Chaudhury-Garg-Mehlhorn:EFX,AACGMM25}]\label{fact:one}
  If there is a counterexample to EFX, then there is a counterexample in which the valuations are non-degenerate, i.e., $v_i(A) \not= v_i(B)$ for any $i$ and distinct sets $A$ and $B$. \end{proposition}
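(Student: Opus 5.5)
The plan is a perturbation argument. Take a counterexample $\instance$ with monotone valuations (possibly with ties) admitting no EFX allocation. I will construct tie-free valuations $v'_i$ such that every strict preference of $v_i$ is kept, and ties are broken in a way that can only create more strong envy, never remove it. Then any EFX allocation for the $v'_i$ would also be EFX for the $v_i$. Since there is none, the perturbed instance is a non-degenerate counterexample.

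For the construction, let $\delta>0$ be smaller than the minimum positive gap $|v_i(A)-v_i(B)|$ over all $i$ and all sets $A,B$ with $v_i(A)\neq v_i(B)$. Such a $\delta$ exists since $M$ is finite. Fix an injective weight function $w:2^M\to(0,\delta/2)$ that is strictly increasing under inclusion. For example, take $w(S)=\tfrac{\delta}{2}\cdot\tfrac{\sum_{g\in S}2^{\index(g)}+1}{2^{m}+1}$, which is injective because binary encodings are unique and monotone because adding a good adds a positive term. Set $v'_i(S)=v_i(S)+w(S)$. Then:
\begin{itemize}
\item $v'_i$ is strictly monotone.
\item $v'_i(A)\neq v'_i(B)$ for $A\neq B$: if $v_i(A)=v_i(B)$ this follows from injectivity of $w$; otherwise the gap in $v_i$ exceeds $\delta/2$, which bounds $|w(A)-w(B)|$.
\item Strict inequalities of $v_i$ are preserved.
\end{itemize}

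The main obstacle is checking that EFX for the $v'_i$ implies EFX for the $v_i$. Suppose $X$ is EFX for $v'$, so $v'_i(X_i)\ge v'_i(X_j\setminus g)$ for all $i,j$ and all $g\in X_j$. If $v_i(X_i)<v_i(X_j\setminus g)$ held, the gap would be at least $\delta$, so $v'_i(X_i)<v'_i(X_j\setminus g)$ because the perturbation changes differences by less than $\delta/2$. That is a contradiction, so $v_i(X_i)\ge v_i(X_j\setminus g)$ and $X$ is EFX under the original valuations.

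The delicate point is the direction of the non-strict inequality: ties in $v$ may be broken either way, but this is harmless, since EFX under $v'$ forces only weak inequalities under $v$. Hence the absence of an EFX allocation for $v$ transfers to $v'$. If the paper's convention of strict monotonicity is already assumed, the monotone term in $w$ is only there to keep $v'$ monotone, and any injective small perturbation that respects inclusion works the same way.
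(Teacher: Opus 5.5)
Your proof is correct and is essentially the paper's argument: the paper multiplies the valuations by a large factor so that distinct values are at least $2^m$ apart and then adds $\sum_{g_i\in S}2^i$, which is the same injective, inclusion-monotone binary-encoding perturbation dominated by the value gaps, with the scaling applied to the values instead of to the perturbation. One small slip: the $+1$ in your $w$ makes $v'_i(\emptyset)=w(\emptyset)>0$, which violates the normalization $v_i(\emptyset)=0$ that the paper assumes, so drop the $+1$ (or subtract $w(\emptyset)$ from every value); this changes no comparison.
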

\begin{proof}[Sketch]
    We sketch the proof that is presented fully in the Lean code accompanying this paper. Let the goods be $g_0$ to $g_{m-1}$. Given a degenerate counterexample, we can obtain a non-degenerate version by perturbing the valuation functions as follows. We construct a new valuation function that first multiplies all valuations by a large multiplicative factor, so that any two distinct values are at least $2^m$ apart, and then increases the value of a set $S$ by  $\sum_{g_i \in S} 2^i$. 
\end{proof}

Non-degeneracy implies that non-empty sets have strictly positive value. 
\begin{proposition} If $m \ge n$, all agents receive non-empty bundles in an EFX allocation for a non-degenerate instance.\end{proposition}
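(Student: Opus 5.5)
The plan is a pigeonhole-plus-contradiction argument. Let $X=(X_0,\ldots,X_{n-1})$ be an EFX allocation of a non-degenerate instance with $m \ge n$, and suppose, for contradiction, that some agent $i$ receives $X_i=\emptyset$. Since the $n$ bundles partition $M$ and $i$ contributes nothing, the remaining $n-1$ agents share all $m \ge n > n-1$ goods. By pigeonhole, some agent $j \neq i$ holds a bundle with $|X_j| \ge 2$.

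Next I would pick any $g \in X_j$. Then $X_j \setminus \{g\}$ is non-empty, since $|X_j|\ge 2$. Non-degeneracy together with monotonicity gives $v_i(S) > v_i(\emptyset) = 0$ for every non-empty $S$: if $S\neq\emptyset$ then $v_i(S)\neq v_i(\emptyset)$, and $v_i(S) \ge 0 = v_i(\emptyset)$. Hence
\[
v_i(X_j \setminus \{g\}) > 0 = v_i(X_i),
\]
so agent $i$ strongly envies agent $j$. This contradicts the EFX property, and therefore every bundle is non-empty.

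There is no real obstacle. The only point needing care is that strict positivity of non-empty sets must come from non-degeneracy (as remarked right after Proposition~\ref{fact:one}) combined with monotonicity and non-negativity; the bare weak inequality $v_i \ge 0$ would not suffice. The pigeonhole step is exactly where the hypothesis $m \ge n$ is used: with fewer goods, every other agent could hold a single good, and an empty bundle would be unavoidable.
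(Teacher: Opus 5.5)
Your proof is correct and follows essentially the same route as the paper: assume some agent's bundle is empty, use $m \ge n$ (pigeonhole) to find another agent with at least two goods, and note that removing any single good leaves a non-empty set that the empty-handed agent values strictly above $0 = v(\emptyset)$. Your version only spells out the pigeonhole step and the source of strict positivity, which the paper leaves implicit; in the paper's setup strict monotonicity alone would already give that positivity.
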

\begin{proof} Assume that there is an agent that receives an empty bundle. Let $v$ be its valuation. Then there is another agent that receives a bundle $B$ containing at least two items. Then for any $g \in B$, we have $v(B \setminus g) > 0 = v(\emptyset)$.\end{proof}

From now on, we assume that our instance is non-degenerate. Then in an EFX allocation, all agents receive a non-empty bundle, and, hence, any agent receives at most four items for $m=6$, five items for $m=7$,
and six items for $m=8$.

\subsection{Fixing an Item Order for Agent Zero}\label{item order} 

Obviously, EFX is symmetric in agents and items. We must not fix an overall valuation for any agent. But we can fix the valuation of agent zero on bundles of size one.
We postulate that $v_0(0) < v_0(1) < \ldots < v_0(m-1)$, where we identify the singleton set with its content.

\begin{observation} 
    If there is a counter-example to EFX then there is a counter-example in which $v_0$ obeys the item order above. 
\end{observation}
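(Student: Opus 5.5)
We first apply Proposition~\ref{fact:one} to replace the given counterexample by a non-degenerate one, and then relabel the goods. Non-degeneracy makes the $m$ singleton values $v_0(\{g_0\}),\ldots,v_0(\{g_{m-1}\})$ pairwise distinct, so there is a unique permutation $\pi$ of $[m]$ with $v_0(\{g_{\pi(0)}\}) < v_0(\{g_{\pi(1)}\}) < \cdots < v_0(\{g_{\pi(m-1)}\})$.

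Let $\sigma = \pi^{-1}$ and rename the goods by $g \mapsto \sigma(g)$. Formally, for every agent $i$ define $v_i'(S) = v_i(\pi(S))$, where $\pi(S)=\{g_{\pi(k)} : g_k\in S\}$. Each $v_i'$ is still a monotone, non-negative, non-degenerate valuation with $v_i'(\emptyset)=0$, because $\pi$ induces a bijection on $2^M$ that preserves inclusion and cardinality. By construction, $v_0'(\{g_k\}) = v_0(\{g_{\pi(k)}\})$, which is strictly increasing in $k$. This is exactly the postulated order $v_0'(0)<v_0'(1)<\cdots<v_0'(m-1)$.

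It remains to check that the relabelled instance has no EFX allocation. Suppose $(X_0',\ldots,X_{n-1}')$ were EFX for $(v_i')$. Set $X_i = \pi(X_i')$. This is again a partition of $M$, and $v_i(X_j\setminus\{\pi(g)\}) = v_i'(X_j'\setminus\{g\})$ and $v_i(X_i)=v_i'(X_i')$ hold for all $i,j$ and $g\in X_j'$. Hence every inequality $v_i'(X_j'\setminus g)\le v_i'(X_i')$ transfers verbatim to $v_i(X_j\setminus \pi(g))\le v_i(X_i)$, and $(X_i)$ would be EFX for the original instance. That contradicts the original instance being a counterexample. Note that the agents themselves are not permuted, so agent zero is still agent zero.

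No step is a real obstacle. The only point requiring care is the strictness of the order: it needs non-degeneracy, which is why Proposition~\ref{fact:one} is invoked first. Without it, ties among singletons would only allow a weak ordering $\le$.
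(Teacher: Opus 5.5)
Your proof is correct and follows the paper's argument. The paper also simply renames the goods so that agent zero's singleton values increase and transports the valuations along this renaming, with strictness coming from the standing non-degeneracy assumption (Proposition~\ref{fact:one}) that you invoke explicitly; your check that an EFX allocation of the relabelled instance pulls back to the original instance spells out what the paper leaves to ``redefine the valuations accordingly'' and to the Lean development.
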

\begin{proof} Simply rename the goods. More presisely, assume the goods are $h_0$, $h_1$, \ldots, and $v_0(h_{i_0}) < v_0(h_{i_1}) < \ldots$. Replace $h_{i_\ell}$ by $g_\ell$ in all sets and redefine the valuations accordingly. A formal development can be found in the LEAN companion material. \end{proof}


\subsection{Leveled Valuations}\label{leveled valuations}

Recall that in a non-degenerate instance with seven items, no EFX allocation allocates more than five items to any agent.
Therefore, the EFX-conditions never involve sets of size six or seven. It therefore seems plausible that the valuations of the sets of size six and seven are irrelevant.
In fact, we may even assume that all sets of size five are more valuable than any set of smaller cardinality. This holds since in a comparison $v(A \setminus g) < v(B)$,
the set on the left has size at most four. The set on the right might have size five, but then the set on the left is the empty
set\footnote{In a partition $(A,B,C)$ with $\abs{B} = 5$, we must have $\abs{A} = \abs{C} = 1$ for $m=7$.} and the inequality holds by monotonicity.

In general, for an integer $k$, let $v$ be an arbitrary monotone valuation, and define 
 \[    \bar{v}(A) = \begin{cases} v(A)   & \text{if $\abs{A} < k$} \\
     T (\abs{A} - k + 1) + v(A) & \text{if $\abs{A} \ge k$} \end{cases},
 \]
where $T$ is larger than $v(A)$ for any $A$. Then $\bv(A) < \bv(B)$ whenever $\abs{A} < \abs{B}$ and $\abs{B} \ge k$. Also $\bv$ is monotone. Indeed, assume $A \subset B$. Then $\abs{A} < \abs{B}$. If $\abs{B} < k$, monotonicity follows from the monotonicity of $v$, if $\abs{B} \ge k$, monotonicity holds by construction. Moreover, if $v$ is non-degenerate, then $\bv$ is non-degenerate. We call $\bv$ a \emph{leveled valuation} for levels greater or equal to $k$.

\begin{lemma} Let $k = m - 2$. If $(A,B,C)$ is an EFX allocation for $\bv_0$, $\bv_1$, and $\bv_2$, it is an EFX allocation for $v_0$, $v_1$, and $v_2$.  \end{lemma}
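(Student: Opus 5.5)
The plan is to check the EFX condition for the original valuations directly, comparison by comparison. The key fact is that with $k=m-2$ and three agents, none of the EFX comparisons actually involves a set on which $\bv_i$ and $v_i$ disagree, except in a case where the inequality is trivial. Fix an EFX allocation $(X_0,X_1,X_2)=(A,B,C)$ for $\bv_0,\bv_1,\bv_2$. I need to show that for all agents $i\neq j$ and every $g\in X_j$ we have $v_i(X_j\setminus g)\le v_i(X_i)$, knowing that $\bv_i(X_j\setminus g)\le \bv_i(X_i)$.

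\textbf{Step 1: every bundle is non-empty.} Each $\bv_i$ is strictly monotone, since this was verified above for leveled valuations. Suppose some $X_i=\emptyset$. Since $m\ge 3$, some other bundle $X_j$ contains at least two items. Then for any $g\in X_j$, the set $X_j\setminus g$ is non-empty, so $\bv_i(X_j\setminus g)>\bv_i(\emptyset)=\bv_i(X_i)$. This is strong envy, a contradiction. (Alternatively, one can invoke the earlier proposition on non-empty bundles, using that $\bv_i$ inherits non-degeneracy.) Consequently every bundle has size at most $m-2=k$.

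\textbf{Step 2: the left-hand sides agree.} For $g\in X_j$ we have $\abs{X_j\setminus g}\le k-1<k$. By the definition of the leveled valuation, $\bv_i(X_j\setminus g)=v_i(X_j\setminus g)$.

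\textbf{Step 3: case split on $\abs{X_i}$.}
\begin{itemize}
\item If $\abs{X_i}<k$, then $\bv_i(X_i)=v_i(X_i)$. In this case the $\bv$-inequality is literally the $v$-inequality.
\item If $\abs{X_i}=k=m-2$, then the two other bundles are non-empty (Step 1) and together contain two items, so each is a singleton. Hence $X_j\setminus g=\emptyset$, and $v_i(\emptyset)=0\le v_i(X_i)$ holds trivially by monotonicity and non-negativity.
\end{itemize}
In both cases agent $i$ does not strongly envy agent $j$ under $v_i$, so the allocation is EFX for $v_0,v_1,v_2$.

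The only step needing care is Step 1. Without non-emptiness, a bundle could have size $m-1$ or $m$. Then $X_j\setminus g$ could reach level $k$, where $\bv_i$ is inflated relative to $v_i$, and Step 2 would fail. Strict monotonicity of $\bv_i$ rules this out. Everything else is bookkeeping with the sizes, using $n=3$ to force the singleton bundles in the second case of Step 3.
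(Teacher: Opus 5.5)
Your proof is correct and takes essentially the same route as the paper. All bundles have size at most $m-2$, so each $X_j\setminus g$ lies below level $k$, where $\bv_i=v_i$; the only case with $\bv_i(X_i)\neq v_i(X_i)$ is $\abs{X_i}=m-2$, which forces the other two bundles to be singletons and makes the comparison trivial. The one difference is that you derive non-emptiness of the bundles explicitly from strict monotonicity of $\bv_i$, whereas the paper takes the size bound $m-2$ from its earlier observation about non-degenerate instances.
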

\begin{proof} Let $(A, B, C)$ be an EFX allocation for the valuations $\bv_0$, $\bv_1$ and $\bv_2$. The largest cardinality of any set in an EFX allocation is $m - 2$. Consider agent 0. We need to show $v_0(B \setminus g) \le v_0(A)$ for any $g \in B$. 
We have $\bv_0(B \setminus g) \le \bv_0(A)$, and $B \setminus g$ has cardinality at most $m-3$. Thus $v_0(B \setminus g) = \bv_0(B \setminus g)$. If $A$ has cardinality at most $m-3$, $v_0(A) = \bv_0(A)$, and, hence,  $v_0(B \setminus g) = \bv_0(B \setminus g) \le \bv_0(A) = v_0(A)$. If $A$ has cardinality $m-2$, $B$ has
  cardinality one, and, hence, $B \setminus g = \emptyset$. Then the inequality follows from
  monotonicity.
\end{proof}

\section{Encoding EFX into SAT} \label{sec:encoding}

EFX is complicated because it involves a quantifier alternation.
\[ \forall \,\text{valuations}\, v \ \exists \,\text{allocation}\, X \ \forall \,\text{agents} \,i,j \ \forall \,\text{goods} \,g\in X_j \;:\; v_i(X_j \setminus g) < v_i(X_i). \]

\noindent
If the number $m$ of goods as well as the number $n$ of agents is fixed, the above formula can actually be turned into a finite Boolean combination. We will encode its negation
\[ \exists \,\text{valuation}\, v \ \forall \,\text{allocations}\, X \ \exists \,\text{agents} \,i,j \ \exists \,\text{good} \,g\in X_j \;:\;\neg (v_i(X_j \setminus g) < v_i(X_i)). \]

\noindent 
into SAT and check for satisfiability. The number of potential allocations is rather small compared to the number of potential valuations; see Section~\ref{sec:theoefx}. So we decided on a SAT-formulation where valuations correspond to assignments to propositional variables, and the remaining part of the formula is expressed by Boolean constraints over these variables. 

\subsection{The Variables}\label{sec:variables}
Let $v$ be one of the three valuations. Number the subsets of $[m]$ from $0$ to $P - 1$, where $P = 2^m$. We identify the sets with bitstrings of length $m$ and let the number of a set be the binary value of the bitstring.
This numbering obeys monotonicity, i.e., if $A \subset B$, then $A < B$. We use the same symbol for the set and its number.

For each agent $i$ and pair $A$ and $B$ of subsets with $A < B$, we introduce a variable\footnote{We now use superscripts for the agents as we will need the subscripts for the sets.}   $x^i_{AB}$. The variable is true iff $v_i(A) < v_i(B)$ and is false if $v_i(A) > v_i(B)$. The number of variables is $P(P-1)/2$ for each agent.
For $m=7$, the number is $2^7 (2^7 - 1)/2 = 2^{13} - 2^6 = 8129$. Recall that for the actual encoding we only use the variables $x^i_{AB}$ where $A<B$.
In order to improve readability and to get rid of a case analysis, if in the below formalization a variable $x^i_{AB}$ occurs with $A>B$ it is
meant to be replaced by $\neg x^i_{BA}$. This is correct, because variables
$x^i_{AA}$ are not generated at all since we only consider partitions of the goods and only non-degenerate valuations.


A standard format for specifying CNF-formulae is the DIMACS format. In this format, variables are simply positive integers, and negated variables are negative integers. We first encoded the variable $x^i_{AB}$ with $i \in \sset{0,1,2}$ by
the number $i \cdot 2^{2m} + A \cdot 2^m + B$. Experiments showed that a dense encoding works better. 

Consider a pair $(A,B)$ with $0 \le A < B < P = 2^m$. In the lexicographic ordering $(0,1), (0,2), \ldots (0, P-1), (1,2), \ldots$ of all pairs, the pair $(A,B)$ is the $\ell$-th pair, where
$\ell = P(P-1)/2 - (P-A)(P- A - 1)/2  + (B - A) = PA - A(A+1)/2 + B - A$. Namely, there are $P(P-1)/2$ pairs altogether, and $(P-A)(P-A -1)/2$ pairs $(i,k)$ where $i > A$, and  $(A,B)$ is the $B - A$-th pair with first index $A$.
Alternatively, out of the $PA$ pairs $(i,j)$ with  $i < A$ and $0 \le j < P$, there are $\sum_{0 \le i < A} (i + 1) = A(A+1)/2$ which have $j \le i$. The numbering of the pairs starts at one. Let $\index(A,B)$ be the index of the pair $(A,B)$ in this numbering.
Then, for $0 \le i \le 2$ and $0 \le A < B < P$, the variable
\[ x^i_{AB} \quad\text{corresponds to}\quad i \cdot P (P-1)/2 + \index(A,B). \]
\noindent
In this way, variable names are then integers between $1$ and $3 P(P -1)/2$. For $m=7$, we get $3 \cdot 2^6 \cdot (2^7 -1) = 24384$ variables.

\subsection{The Valuations}
We need to encode monotonicity and transitivity of the valuations $v_i$.

\begin{description}
  \item[Monotonicity:]  $\bigwedge_{i \in \sset{0,1,2}} \bigwedge_{A \subset B} x^i_{AB}$.
  \item[Transitivity:] $\bigwedge_i \bigwedge_{A,B,C}  (x^i_{AB} \wedge x^i_{BC}) \rightarrow x^i_{AC}$ or $\bigwedge_{A,B,C}  \neg x^i_{AB} \vee \neg x^i_{BC} \vee x^i_{AC}$.
    \end{description}

    For the case $m=7$, transitivity results in a conjunction of $3 \cdot (2^7)^3 = 3 \cdot 2^{21} \approx 6$ million clauses,
    and monotonicity is a conjunction of approximately 6200 clauses.\footnote{For a set of size $\ell$, there are $2^{7 - \ell} - 1$ proper supersets. So the number of pairs $(A,B)$ with $A \subset B$ is 
\[ \sum_\ell \binom{7}{\ell} (2^{7 -\ell} - 1) = \sum_\ell \binom{7}{\ell} 2^\ell - 2^7 = 3^7 - 2^7 = 2059. \] } Of course, it would suffice to require monotonicity only for sets whose cardinality differs by one. But then the other inequalities have to be inferred with the use of transitivity. 
Since the number of monotonicity clauses is much smaller than the number of transitivity clauses, we have chosen to be verbose in the case of monotonicity.

\subsection{Item Ordering}\label{ordering}
In Section~\ref{item order} we argued that without loss of generality,  agent zero values $\{g_0\} < \{g_1\} < \ldots < \{g_{m-1}\}$. This is expressed by
\[ \text{Item Order} \assign \bigwedge_{0 \le i < j < 7} x^0_{\{g_i\}\{g_j\}}.\]

\subsection{Leveled Valuations} The following formula states that valuation $v \in \sset{0,1,2}$ is leveled from $k$ on.  On any level $\ge k$, we order the sets by their number.
\[ \text{Leveled Val} \assign \bigwedge_{A,B:\; \abs{A} < \abs{B} \; \wedge \; \abs{B} \ge k \text{\ or\ } \abs{A} = \abs{B} \ge k \; \wedge\; A < B} x^i_{AB}\]

In fact, the following simpler formula also works:
\[ \text{Simplified Leveled Val} \assign \bigwedge_{A,B: \abs{A} < \abs{B} \wedge \abs{B} \ge k } x^i_{AB} ,\]
as none of the other subformulae contains $x^i_{AB}$ for $\abs{A} \ge k$ and $\abs{B} \ge k$.

For leveled allocations, transitivity becomes easier to specify. In fact, we only need to require it if all three sets have cardinality less than $k$.
If $C$ has cardinality $k$ or more and $A$ does not, the implication holds. If $A$ also has cardinality $k$ or more and $B$ does not, the first antecedent is false, and the implication holds.
If $A$ and $B$ both have cardinality $k$ or more, the implication holds. If $C$ has cardinality less than $k$ and $A$ has cardinality $k$ or more, the conclusion is false. But then, one of the antecedents is also false.
If $B$ has cardinality $k$ or more, the second antecedent is false, and if $B$ has cardinality less than $k$, the first antecedent is false. We conclude that if one of the three sets has cardinality $k$ or more, the implication holds. 

\subsection{Allocations}\label{allocations}
For the case $m=7$, there are $3^7$ ways of allocating 7 items to 3 agents.
But we only need to consider the allocations where all three sets are non-empty. There are 3 maps that map all items to the same agent, and there are $3 \cdot 2^7$ maps that map all items to at most two agents.
The maps that map all items to one agent are counted twice here. So there are $3^7 - 3 \cdot 2^7 + 3$ allocations. This evaluates to $3(3^6 - 2^7 + 1) = 3(729 - 128 + 1) = 1806$. Out of these allocations,
we have 126 allocations with 2 singleton sets, 1050 allocations with one singleton set, and 630 allocations with no singleton set. 
\begin{itemize}
    \item Two singletons: There are $\binom{7}{2} = 21$ ways of choosing the two singletons and then 6 ways of assigning the bundles to the agents. So 126.
    \item One singleton: There are 7 ways to choose the singleton and then $\binom{6}{2} + \binom{6}{3}/2 = 15 +10 = 25$ ways of dividing the remaining 6 items into two bundles, of which both are non-empty and not a singleton. In the division into two bundles of size 3, we have to divide by two, because we are counting unordered pairs. And again, 6 ways of assigning the bundles to the agents. So $ 7\cdot 25 \cdot 6 = 1050$
    \item No singleton: The first agent chooses a bundle of size 2 or size 3, then the second chooses a bundle, \ldots. So $\binom{7}{2}(\binom{5}{2} + \binom{5}{3} + \binom{7}{3} \binom{4}{2} = 21 \cdot (10 + 10) + 35 \cdot 6 = 420 + 210 = 630$.
    \end{itemize} 
    So a total of $126 + 1050 + 630 = 1806$ allocations for $m=7$. For six items, the numbers are $\binom{6}{2} \cdot 6 = 90$, $6 \cdot \binom{5}{2} \cdot 6 = 360$, and $\binom{6}{2} \cdot \binom{4}{2} = 90$ for a total of 540. For $m=8$, there are $3(3^7 - 2^8 + 1) = 5796$ allocations. 
    
    Consider an allocation $(A,B,C)$. Agent $0$ receives $A$, agent $1$ receives $B$, and agent $2$ receives $C$. Also, $v_i$ is the valuation of agent $i$. We need to say that the allocation does not work. Let $(A,B,C)_0$ stand for $(A,B,C)$ is EFX for agent $0$, i.e., $v_0(B \setminus g) < v_0(A)$ for all $g \in B$ and $v_0(C \setminus g) < v_0(A)$ for all $g \in C$.\footnote{We are using non-degeneracy.} Then $\neg (A,B,C)_0$ if there is a $g \in B$ with $v_0(B \setminus g) > v_0(A)$ or there is a $g \in C$ with $v_0(C \setminus g) > v_0(A)$. As a formula,
    \[ \neg (A,B,C)_0  \equiv \bigvee_{g \in B} \neg x^0_{B \setminus \{g\},A} \vee  \bigvee_{g \in C} \neg x^0_{C \setminus \{g\},A}. \]
This is a disjunction with $\abs{B} + \abs{C}$ literals. We have similar disjunctions $\neg (A,B,C)_1$ and $\neg (A,B,C)_2$. Then, non-existence of EFX is a conjunction over all partitions, i.e., 
\[ \neg \EFX \assign \bigwedge_{\text{$(A,B,C)$ partition of $[m]$}} \neg (A,B,C)_0 \vee \neg (A,B,C)_1 \vee \neg (A,B,C)_2.\]

For $m=7$, $\neg (A,B,C)_0 \vee \neg (A,B,C)_1 \vee \neg (A,B,C)_2$ is a disjunction with 14 literals, namely two literals for each good. 

\subsection{Putting It Together} \label{sec:puttogether}
We want to know whether the following formula is satisfiable.
\[ \text{Monotonicity} \wedge
  \text{Transitivity} \wedge \text{Item Ordering} \wedge \text{Levelled Valuations} \wedge \neg \EFX.\]
A satisfying assignment would encode a counterexample to EFX.

\paragraph{On reducing the number of transitivity clauses.} It seems that the number of transitivity clauses can be further reduced. We also thought this for a while, but in fact any approach we tried destroyed the ordering property.
At some point, we had changed the SAT-formula so that the transitivity constraint $A < B \land B < C \rightarrow  A < C$ would only be generated
if $A$ is a subset of $B$ or $A$ and $B$ are disjoint and $B$ is a subset of $C$ or $B$  and $C$ are disjoint and $A$ is a subset of $C$ or $A$ and $C$ are disjoint.
To our surprise, for $n = 3$ and $m = 6$, we obtained a satisfiable formula. The following examples demonstrate issues. Consider $m = 5$ and assume sets of size $1$ precede sets of size $2$ in the order. 
Let $\{1\} <...< \{5\}$. For subsets of size $2$, there are no $3$ disjoint subsets. So we do not get \emph{any} transitivity constraint using the
above simplification. We could have:  $\{3,4\} < \{1,5\}$ and $\{1,5\} < \{2,3\}$ and $\{2,3\} < \{1,4\}$ and $\{1,4\} < \{2,5\}$ and $\{2,5\} < \{3,4\}$.
So, $<$ on the subsets of size two is not an order.
Another independent example is the interplay with monotonicity. Assume $\{1,2\} < \{3,4\} < \{5\}$. Monotonicity implies $\{5\} < \{5,4\}$ and then
transitivity implies $\{1,2\} < \{5,4\}$.

\newcommand{\MT}{\textrm{MT}}

\section{Experiments} \label{Experiments}

For the experiments, we use SPASS-SAT~\cite{BrombergerEtAl19} to compute reduced CNFs and to check satisfiability of CNFs.
Reducing the generated CNF is always possible, because the monotonicity, leveling, and item-ordering clauses are all units.
The difference in clauses between the generated CNF and the reduced CNF shows the impact of our theoretical foundations as well.
SPASS-SAT is the SAT solver out of the SPASS workbench
that is contained in the SMT solver SPASS-SATT, as well as in members of the SPASS-SCL family.
Results at the SAT level are confirmed by CaDiCal~3.0.0~\cite{BiereFFFFP24} and the proofs shown correct by DRAT-trim~v05.22.2023~\cite{WetzlerHH14}.
In addition, the satisfiability result for three agents and eight goods is translated from propositional variables back to the actual valuations. They have been checked
by separate code probing all possible assignments of the eight goods. It is presented and discussed in detail in Section~\ref{sec:eightitems}.
SPASS-SAT and CaDiCal have been compiled and executed on a Debian~12 Linux server equipped with AMD EPYC 9754 128-Core processors and 2~TB of main memory.

\subsection{Six Items}

We report on experiments with three agents and six items.
In particular, to demonstrate the effects of our theoretical foundations on the running time.

We present an excerpt from our experiments.
We have 6084 variables. The following table shows the number of initial clauses, clauses after reduction by SPASS-SAT,
the running time, and the memory usage for the values of $k=5$ and $k=4$ for leveling and finally with the
addition of the item order.
Both refinements leveling and item order add only units to the CNF which typically always
improve the performance of an afterwards applied solver. However, a lower leveling value $k$
enables less transitivity clauses, see Section~\ref{leveled valuations}. This explains the huge difference
in the number of clauses between values $k=5$ and $k=4$, see below.
Therefore, we decided to go for $k = m-2$ leveling plus the item ordering for all other runs.

\begin{center}
  \begin{tabular}{|l|r|r|r|r|} \hline
    & generated clauses & after reduction & time (sec) & mem usage (MB) \\ \hline \hline
    $k = 5$  &  461835 & 110520 & 33.4 & 246 \\ \hline 
    $k = 4$  &  189723 &  47310 & 25.6  & 146 \\ \hline
    $k = 4$, item-ordering & 189735 & 43813  & 0.53 & 44 \\ \hline  
  \end{tabular}
\end{center}

\subsection{Seven Items}

We have $24\,384 = 3 \cdot 8\,128$ variables. For $k = 5$ and item-ordering, the number of clauses is 2\,596\,677 initially and 680\,779 clauses after CNF-reduction. SPASS-SAT needs about 30~hours for a DRAT proof of
unsatisfiability of about 35~GB. Unsatisfiability is confirmed by CaDiCal after about 26 hours, producing a proof of unsatisfiability of 37~GB. Both proofs have been verified by DRAT-trim. The DRAT proof of SPASS-SAT
contains an unsatisfiable core of 82\,417 clauses, meaning that 82\,417 clauses out of the initial 680\,779 clauses are sufficient to derive unsatisfiability. In particular, this suggests that the number of generated transitivity clauses can
be further reduced. We thought of this for a while but couldn't find a sound reduction, see the paragraph at the end of Section~\ref{sec:puttogether}.


\section{The Counterexample: Eight Items} \label{sec:eightitems}

We have $97\,920$ variables. For $k = 8$ and item-ordering, the initial number of clauses is 29\,202\,318. CNF-reduction eliminates all unit clauses present in the problem formulation and reduces the number of clauses to 8\,138\,126. Recall that the running time for 7 items was more than two days, whilst the running time for 6 items was less than a second. We concluded that we need to give SPASS-SAT additional hints to have a chance of proving the existence of EFX for eight items. We analyzed the proof for seven items and experimented with adding the first seven unit clauses found in this run as guidance for eight items. To our surprise, SPASS-SAT after 20~hours as well as CadiCal after 3~hours found models.
The model found by SPASS-SAT has been retranslated into valuations for the three agents
and probed for all assignments by three independent procedures. They all confirm the non-existence of EFX on the valuations.

\subsection{The Approach}

As mentioned in the introduction, we started with the belief that EFX allocations always exist. The resolution proof for $m = 6$ took less than a second, the resolution proof for $m = 7$ took more than two days, so it seemed hopeless at first to try $m = 8$. A close inspection of the proof for $m = 7$ revealed that the first unit clauses were found about half an hour before the end of the proof, see Table~\ref{Proof Development}.

\begin{table}[bt]
  \begin{center}
    \begin{tabular}{|r|r|r|r|} \hline
      \# clauses &  \# 3  &  \#2 &   unit  \\ \hline
         1669056 &    629029 &   493893 &        0 \\
   1592249 &    549462 &    493893 &         0 \\  
   1589983 &    549462 &    493893 &         0 \\   
   1591558 &    549462 &    493893 &         0 \\  
   1594526 &    549462 &    493894 &         0 \\   
   1667572 &    629200 &    494115 &         0 \\ 
   1671117 &    628766 &    491050 &         7 \\  
   1582071 &    549022 &    491050 &         7 \\  
   1587303 &    549022 &    491050 &         7 \\   
   1582779 &    549024 &    491050 &         7 \\ 
   1657505 &    628232 &    491693 &         7 \\   
   1583434 &    549371 &    491696 &         7 \\   
   1591177 &    549286 &    491698 &         7 \\   
   1583813 &    549286 &    491701 &         7 \\  
  1591025 &    549293 &    491704 &         7 \\  
   1645206 &    627185 &    489307 &        21 \\ 
   1637578 &    623189 &    485740 &        62 \\   
  1674380 &    619671 &    481233 &        84 \\
     1672043 &    616277 &    475788 &       109 \\ 
   1643678 &    608440 &    468261 &       142 \\ 
   1593206 &    607100 &    459740 &       173 \\  
    870200 &    422646 &    309946 &      2396 \\  

      \hline
    \end{tabular}
   \end{center}

   \caption{The last lines of the progress statistics of SPASS. The entire log-file has about 1200 lines and the run took 126\,574 seconds. So, on average, a line is added to the log about every 100 seconds. The columns show the total number of clauses, the number of clauses of length three, the number of clauses of length two, and the last column the number of clauses of length one (unit clauses). The problem formulation contains a fair number of unit clauses; they are eliminated in a preprocessing step. }\label{Proof Development}
   \end{table}

Seven unit clauses were found at first, namely (after retranslation to human readable form)
\[  \sset{g_0,g_1} <_0 \sset{g_6},\ \sset{g_1,g_2} <_0 \sset{g_6},\ \sset{g_1,g_3,g_4} <_0 \sset{g_6}    \]
for agent zero and
\[ \sset{g_4,g_5} <_1 \sset{g_1,g_6},\ \sset{g_1,g_3,g_4} <_1 \sset{g_0,g_5},\ \sset{g_0,g_2} <_1 \sset{g_0,g_4},\ \sset{g_0,g_2} <_1 \sset{g_1,g_4}\]
for agent one. Then, for a while, no additional unit clauses were found, and finally, the number of unit clauses grew quickly.

We suspected that these units could serve as a useful case distinction in a proof of unsatisfiability.
For $m = 7$ a case distinction on values for the above unit clauses resulted in short solving times between $2$ and $30$ minutes. So they seemed to be a plausible start
for a case distinction for $m = 8$.
We extrapolated what these units
could mean for $m = 8$. For the valuation of agent zero, we replaced $g_6$ by $g_7$, i.e., we kept the most valuable good on the right hand side of the inequalities.
For agent one, we found no meaningful generalization and simply kept these inequalities. Our idea was then to run SPASS on  all 
$2^7$ possible assignment to these $7$ units, possibly in parallel, with the hope that the assignment to these seven units would give SPASS a head start.
To explore this approach, we tried a small random subset of these $2^7$ valuations translated to $m=8$. To our surprise, one of the runs produced the model given in the next section.

\subsection{The Valuations}

Tables~\ref{val0}, \ref{val1}, and \ref{val2} in Appendix~\ref{Valuations} show the three valuations. A truth assignment to the variables $x^i_{AB}$ where $i \in \sset{0,1,2}$ and $0 \le A < B <P$ fixes a linear order on the sets of goods. For a set $A$, let $v_i(A)$ be the rank of $A$ in the order corresponding to agent $i$. Each table specifies a valuation. It has a length of 256 broken into five parts of length 52 and one part of length 48. Each row has three entries: the set as an integer, the set as a bitstring, and the rank of the set. The three tables are available on the companion web-page. There, the valuations are also given in different forms, e.g., ordered by cardinality of the sets. 

We consider it important to study the structural reasons for the absence of EFX allocations in the counterexample at least for two reasons. First, understanding the counterexample better might allow us to simplify the counterexample and make it amenable to verification without the help of a computer. Second, a better understanding of the counterexample might allow us to find counterexamples to other conjectures in the area, where more items would be needed in a counterexample.  

Figure~\ref{fig:counterexample_3_8_valuations} visualizes the valuations. The figure already reveals substantial structure. In particular, there are several bundles of small cardinality (e.g., size two) that receive comparatively high value, while at the same time a noticeable number of bundles of size four are relatively light. Thus, value is not determined purely by bundle size, and the valuation is far from additive or cardinality-based. This structural irregularity is further explored in Figure~\ref{fig:counterexample_3_8_marginal}. It shows the marginal values of goods with respect to sets of goods. For a valuation $v$, a good $g$, and a set $S$ of goods, the marginal value of $g$ with respect to $S$ is defined as $v(S \cup g) - v(S)$, the marginal value being zero if $g \in S$. The marginal value of a good differs widely. The following paragraph looks at marginal values more closely.

\begin{figure}[t]
    \centering
    \includegraphics[width=\linewidth]{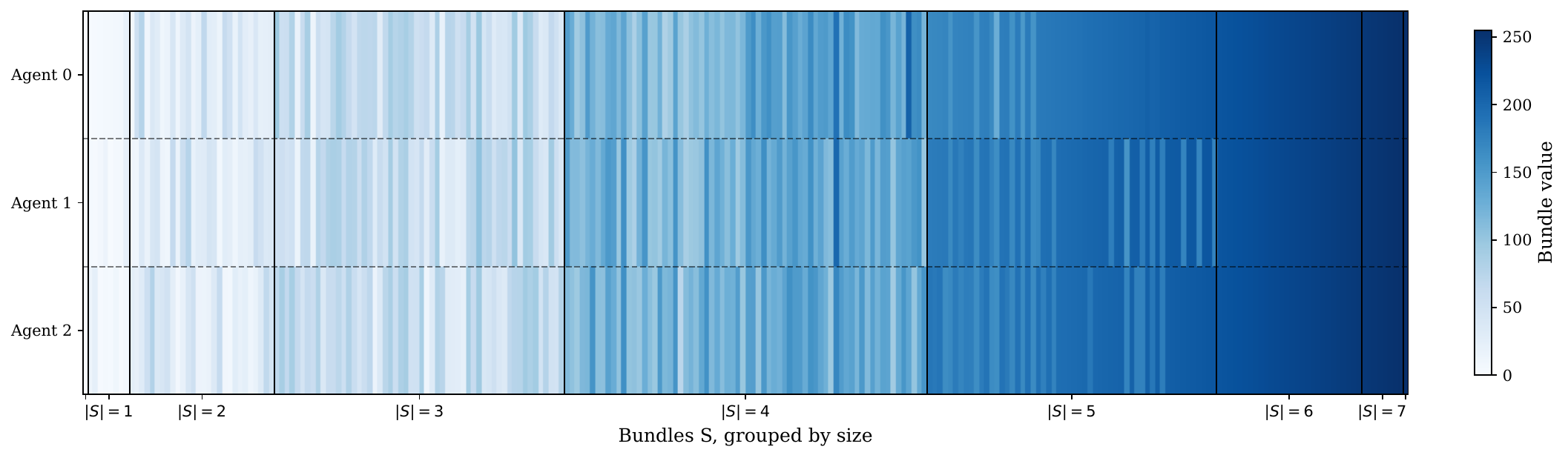}
    \caption{Visualization of the counterexample to EFX existence for 3 agents and 8 goods. The colors indicate the valuations $ v_i(S) $ for all agents $ i $ and bundles of goods $ S $. Bundles are grouped by cardinality, separated by black vertical lines, and ordered lexicographically within each group.}
    \label{fig:counterexample_3_8_valuations}
\end{figure} 
 \begin{figure}[t]
    \centering
    \includegraphics[width=\linewidth]{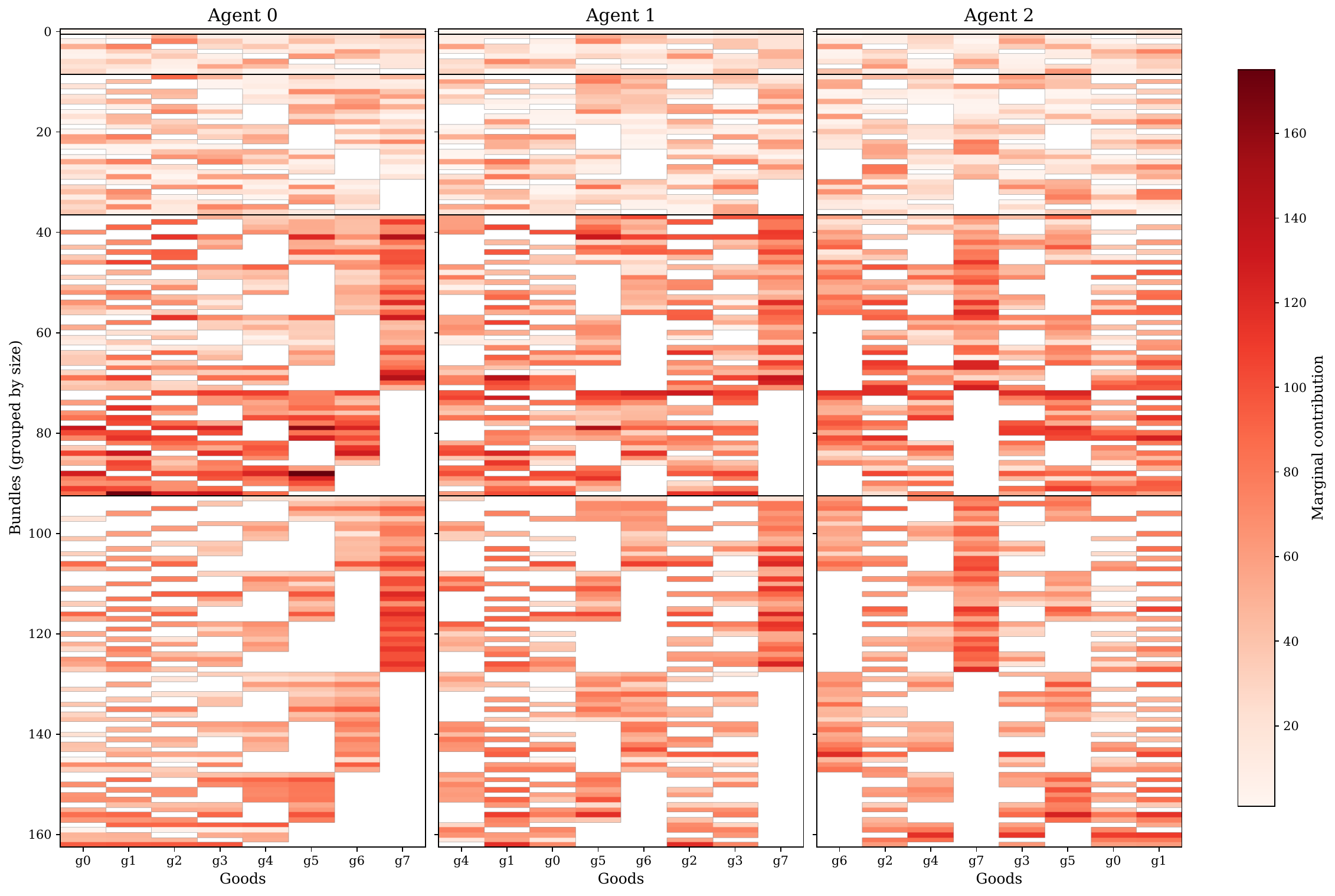}
    \caption{Visualization of the marginal contributions in the counterexample to EFX existence for 3 agents and 8 goods. The colors indicate the $ \Delta_i(S, g) = v_i(S \cup g) - v_i(S)$ values for all $ S $ and $ g $. For each of the agents, the goods are ordered by increasing value. \\ \protect
 The bundles $ S $ of goods are visualized along the vertical axis. Bundles are grouped by size, the groups are separated by black horizontal lines and sorted by size increasingly from top to bottom. Blank regions correspond to bundles that contain the respective good and for which the marginal value is thus zero. For example, for $v_0$, we have two very dark colors in the column of $g_2$ and the block for three elements, indicating a large marginal value of $g_2$. See the paragraph on marginal values for a more detailed discussion. } 
    \label{fig:counterexample_3_8_marginal}
  \end{figure}

\paragraph{Marginal Values.} The same good can have vastly different marginal values with respect to different sets, even sets with the same cardinality. For example, with respect to valuation $v_0$ and sets $S$ of cardinality three, the good $g_0$ has the marginal values 4 x 1, 10,
13,
16, 
18,
19,
21,
2 x 27,   
2 x 28, 
36,
38,
42,
54, 
56, and
59. With respect to sets of cardinality four, the variance is even more striking. The smallest marginal value is 11, and the largest marginal value is 131. 

A remark is in order at this point; the statement about marginal values has to be interpreted with care. In other words, it is not really clear to us, what the differences in marginal value indicate. Recall that a valuation function $z$ is called additive, if $z(S) = \sum_{g \in S} z(g)$, where the sum is over all elements contained in $S$. For an additive valuation, the marginal value of any good is a constant. However, in this paper, valuations map sets to their rank in the linear order of sets by value, and hence, even for an additive valuation, this translation will make the marginal value of items non-constant.

\paragraph{Distance to EFXness.} For an allocation $(X_0,X_1,X_2)$ to be EFX, $2m$ conditions must hold. Indeed, let $g$ be any good ($m$ choices), and let $X_j$ be the bundle containing $g$. Then the EFX-conditions are $v_i(X_j \setminus g) \le v_i(X_i)$ for $i \in \sset{0,1,2} \setminus j$ (two choices). For the counterexample, 272 allocations out of 5796 violate only a single EFX-condition, and, of course, every allocation violates at least one.

\paragraph{Violations of the MMS-Property.} EFX allocations for 3 agents are known to exist if one agent has an MMS-feasible valuation. Thus, in the counterexample, none of the valuations is MMS-feasible, i.e., for each $i$ there must be sets $A$, $B$, $C$, and $D$ of items such that $A \cap B = \emptyset = C \cap D$ and $A \cup B = C \cup D$ and $\min(v_i(A), v_i(B)) > \max(v_i(C), v_i(D))$. In our counterexample and for $v_0$, there are more than 1500 such quadruples of sets. Figure~\ref{fig:mms_violations} shows some of them. 

\begin{figure}[t]
\begin{center}
\begin{tabular}{|c|c|c|c|r|r|r|r|}
\hline
$A$ &  $B$ &  $C$ & $D$ & $v(A)$ & $v(B)$ & $v(C)$ & $v(D)$ \\ \hline

 $\{g_1,g_2\}$  & $\{g_0,g_4,g_7\}$  & $\{g_2,g_4\}$ & $\{g_0,g_1,g_7\}$ &  77& 59 & 40&  53 \\ \hline

 $\{g_1,g_2\}$  & $\{g_3,g_4,g_7\}$ &  $\{g_2,g_3\}$ & $\{g_1,g_4,g_7\}$ & 77& 39& 29& 30 \\ \hline

 $\{g_1,g_2\}$  & $\{g_0,g_3,g_4,g_7\}$ & $\{g_0,g_1,g_3\}$ & $\{g_2,g_4, g_7\}$ & 77& 111 & 55 &41 \\ \hline

 $\{g_1,g_2\}$ & $\{g_0,g_3,g_4,g_7\}$ & $\{g_0,g_2,g_3\}$ & $\{g_1,g_4,g_7\}$ & 77& 111& 57& 30  \\ \hline
\end{tabular}
\end{center}
\caption{Some MMS-violations for the valuation $v_0$, i.e., sets $A$, $B$, $C$, $D$ such that $A \cup B = C \cup D$, $A \cap B = \emptyset = C \cap D$ and $\min(v(A),v(B)) > \max(v(C),v(D))$. For $v_0$ the values of the items are increasing, i.e., $v(g_0) < v(g_1) < \ldots < v(g_7)$. It appears that there is significant complementarity between elements $g_1$ and $g_2$, and between $\{g_4,g_7\}$ and a further element in $\{g_0,g_3\}$. Note that $\{g_4,g_7\}$ is a subset of $D$, and, by themselves, there seems to be little complementarity.  }
\label{fig:mms_violations}
\end{figure}

\newcommand{\val}{\mathit{val}}

\subsection{Verification of the Counterexample}

The C++-program in Appendix~\ref{sec: program} checks the counterexample.
The three input valuations are in a single file with $3\cdot256$ lines. Each line contains three values:
the name of the set, the set as a bitstring, and the rank of the set. We construct a two-dimensional array $\val$,
where $\val(i,A)$ is the value of the set $A$ for agent $i$. We then check monotonicity and the EFX-property. For monotonicity, we run for all $i \in \{0,1,2\}$ over all pairs $A$, $B$ with $0 \le A,B < 256$. If $A \subset B$, we check that $v_i(A) < v_i(B)$. For the EFX-property with run over all triples $A$, $B$, $C$ with $0 \le A,B,C < 256$. If the triple does not correspond to a partition, we discard it. Otherwise, we check the EFX-condition with $0$ owning set $A$, $1$ owning set $B$, and $C$ owning set $C$. For example, we must have $v_0(B \setminus g) < v_0(A)$ for all $g \in B$ and $v_0(C \setminus g) < v_0(A)$ for all $g \in C$. The program is available on the companion web-page. 


\section{More Agents and Goods}\label{sec: more}

We extend the counterexample to more than three agents, more precisely, we first extend it to $n \ge 4$ agents and $m = n + 5$ goods and then to $m \ge n + 5$ goods.

We have $n \ge 4$ agents $0$ to $n - 1$ and $m = n + 5$ goods. The goods are the goods $G = \sset{g_0,\ldots,g_7}$ from our counterexample for $3$ agents and $n + 5 - 8 = n - 3$ new goods $H = \sset{h_0, \ldots, h_{n - 4}}$. The valuations are as follows. We use $v_i$, $i \in \sset{0,1,2}$, for the valuations of our basic counterexample and $\bv_i$, $0 \le i \le n-1$, for the valuations of the extended counterexample. Let $T = v_2(G) + 1$. Then,
\begin{align*}
  \bv_i(S) &= v_i(S \cap G)    && \text{for $i = 0,1$ and any $S \subseteq [m]$} \\
  \bv_i(S) & = T \cdot |S \cap H| + v_i(S \cap G)     && \text{for $i \ge 2$}
\end{align*}
Note that the goods in $H$ have no value to the agents $0$ and $1$ and that agents $2$ to $n-1$ have identical valuations. Also subsets of $G$ have the same value as in basic counterexample, and, for any agent $i \ge 2$, any subset containing an element in $H$ has value larger than any subset of $G$. 

Assume that an EFX allocation $(X_0, \ldots, X_{n-1})$ exists. Since $H$ consists of exactly $n-3$ goods, there must be an agent $i^* \in \sset{2,\ldots,n-1}$ that owns no good in $H$, say $i^* = 2$. So $X_{i^*} \subseteq G$ and hence  $\bv_{i^*}(X_{i^*}) \le \bv_{i^*}(G) < T$. Consider now any $X_j$ with $X_j \cap H \not= \emptyset$. If $X_j$ has cardinality two or more, there is a proper subset of $X_j$ that has value at least $T$ for $i^*$, and the allocation is not EFX. Thus, there are $n - 3$ agents each owning one of the items in $H$ and nothing else. The remaining three agents own the items in $G$ and one of them owns at least two items in $G$. Suppose now that there is an agent $i \in \sset{0,1}$ that owns one of the items in $H$ and nothing else. Then $X_i$ has value zero for $i$ and hence $i$ strongly envies the agent owning two or more items in $G$. We conclude that the agents $0$, $1$ and $i^*$ share the goods in $G$ between them. Thus, their bundles would form an EFX allocation for the basic counterexample.  
    
The example readily extends to more goods by adding dummy items of value zero. Indeed, let $I_0 = G \cup H$ be the set of the $n + 5$ goods in the example above, and let $\bv_0$ to $\bv_{n-1}$ be the valuations. Let $I$ be any superset of $I_0$ and extend any valuation to $I$ by defining $\bv_i(S) = \bv_i(S \cap I_0)$ for any subset $S \subseteq I$.  Then there is no EFX allocation of $I$ as it would induce an EFX allocation of $I_0$. This concludes the proof of Theorem \ref{thm:2}.

Note that if $v$ is additive, then $\bv$ is additive as well. Moreover, the proof only requires an EFX counterexample with $n_0$ agents and $m_0 \geq n_0+4$ goods. In fact, the argument already goes through under the weaker assumption $m_0 \geq n_0+1$; however, EFX allocations are known to exist whenever $m_0 \leq n_0+3$.

\begin{theorem}
    Suppose there exists an EFX counterexample for additive valuations with $n_0$ agents and $m_0 \geq n_0+4$ goods. Then, for every $n \geq n_0$ and every $m \geq n + (m_0-n_0)$, there exists an EFX counterexample for additive valuations with $n$ agents and $m$ goods.
\end{theorem}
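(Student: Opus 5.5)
We mimic the construction of Section~\ref{sec: more}, replacing the base counterexample with the additive one. Let $v_0,\dots,v_{n_0-1}$ be additive valuations on a good set $G$, $|G| = m_0$, with no EFX allocation. By Proposition~\ref{fact:one} we may assume non-degeneracy. Its proof perturbs by $\sum_{g_i\in S}2^i$, which preserves additivity, so the perturbed instance is still additive. We treat $n > n_0$; for $n=n_0$ only the dummy step at the end is needed.

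First fix $m = n + (m_0 - n_0)$. Add a set $H$ of $n-n_0$ new goods and $n-n_0$ new agents. The valuations are $\bv_i(S) = v_i(S\cap G)$ for $i < n_0-1$, and $\bv_i(S) = T\cdot|S\cap H| + v_{n_0-1}(S\cap G)$ for $i \ge n_0-1$, where $T = v_{n_0-1}(G)+1$. Thus the last original agent and all new agents are identical. Each $\bv_i$ is a sum of per-good weights, hence additive. One caveat: some goods now have value $0$ for some agents. If strict positivity is required, I would replace $0$ by a tiny $\varepsilon$ and check that the strict inequalities survive; this looks routine by non-degeneracy.

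Now suppose $(X_0,\dots,X_{n-1})$ is EFX. I follow the pigeonhole argument of Section~\ref{sec: more}.
\begin{enumerate}
\item There are $n-n_0+1$ agents of the $\bv_{n_0-1}$-type but only $n-n_0$ goods in $H$. So some such agent $i^*$ has $X_{i^*}\subseteq G$, and hence $\bv_{i^*}(X_{i^*}) < T$.
\item Any bundle meeting $H$ with at least two goods has a proper subset worth at least $T$ to $i^*$, so $i^*$ would strongly envy it. Hence every bundle meeting $H$ is a singleton $\{h\}$.
\item Exactly $n-n_0$ agents own singletons from $H$. The remaining $n_0$ agents share $G$. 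Since $m_0 > n_0$, one of them owns at least two goods of $G$.
\item If an agent $i<n_0-1$ held some $\{h\}$, it would have value $0$ (or $\varepsilon$). It would then strongly envy the owner of the bundle with at least two goods of $G$, because non-degeneracy makes nonempty sets positive. So the agents $0,\dots,n_0-2$, together with one agent of type $\bv_{n_0-1}$, partition $G$.
\item On subsets of $G$ their valuations coincide with $v_0,\dots,v_{n_0-1}$. The restricted partition is therefore an EFX allocation of the base instance, a contradiction.
\end{enumerate}

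For larger $m$, add dummy goods and set $\bv_i(S)=\bv_i(S\cap I_0)$, where $I_0 = G\cup H$. This is again additive, with weight $0$ (or $\varepsilon$) on dummies. An EFX allocation of the enlarged instance restricts to an EFX allocation of $I_0$, since removing dummies does not change values. This restriction step is the one needing care with $\varepsilon$-weights. I would check the EFX inequalities with a slack argument, or simply note that zero-valued goods are standard in additive settings. The hypothesis $m_0\ge n_0+4$ is only used to ensure the base instance has more goods than agents (item 3); it is otherwise automatic, since EFX exists whenever $m_0\le n_0+3$. I expect the main obstacle to be this bookkeeping around zero-valued goods and non-degeneracy, not the combinatorial core.
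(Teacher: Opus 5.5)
Your proposal is correct and is essentially the paper's proof. It uses the same extension by $n-n_0$ goods $H$, weighted $T$ by copies of one base agent and $0$ by the others, the same pigeonhole/singleton argument, and the same zero-valued dummy goods, with additivity preserved at each step; the $\varepsilon$-digression is unnecessary, since the paper allows zero-valued goods. Your explicit use of the additivity-preserving perturbation from Proposition~\ref{fact:one} is a useful addition: the step ``an agent holding a single $H$-good has value $0$ and hence strongly envies'' needs nonempty subsets of $G$ to have positive value, which the paper leaves implicit for a general additive base instance.
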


\section{Positive Results for Three Agents}\label{sec:pos-results}

In this section, we prove Theorem~\ref{thm:3}. 
Given a partition $X=(X_0,\ldots,X_{n-1})$, we say that a bundle $X_j$ is
\begin{itemize}
    \item \emph{EFX-feasible} for agent $i$ if $v_i(X_j)\geq v_i(X_\ell\setminus g)$ for every $\ell\in\{0,\ldots,n-1\}$ and every $g\in X_\ell$.

    \item \emph{EEFX-feasible} for agent $i$ if there exists a partition $(Y_0,\ldots,Y_{n-1})$ of $\bigcup_\ell X_\ell$ such that $Y_\ell=X_j$ for some $\ell$, and $Y_\ell$ is EFX-feasible for agent $i$. Such a partition $Y$ is called a \emph{certificate} for the EEFX-feasibility of $X_j$ for agent $i$.

    \item \emph{EF1-feasible} for agent $i$ if for every $\ell\in\{0,\ldots,n-1\}$, there exists $g \in X_\ell$ such that $v_i(X_j)\geq v_i(X_\ell\setminus g)$ or $X_\ell = \emptyset$.
\end{itemize}

Our proof uses the algorithm of \cite{Plaut-Roughgarden} as a subroutine. We say that a partition is EFX with respect to a valuation function $v$ if all bundles are EFX-feasible for an agent with valuation $v$.

\begin{lemma}[\cite{Plaut-Roughgarden}]
\label{PlautRoghgarden}
Let $X=(X_0,\ldots,X_{n-1})$ be an arbitrary partition of the goods into $n$ bundles. For any monotone valuation function $v$, the algorithm $PR(X,v)$ outputs a partition $X'=(X'_0,\ldots,X'_{n-1})$ that is EFX with respect to $v$ and satisfies
$$\min_{i\in\{0,\ldots,n-1\}} v(X_i)\leq \min_{i\in\{0,\ldots,n-1\}} v(X'_i).$$
Moreover, equality holds if and only if the original partition $X$ is already EFX with respect to $v$.
\end{lemma}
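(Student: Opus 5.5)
The plan is to describe the algorithm $PR(X,v)$ explicitly and analyse it with a potential argument; this is the standard Plaut--Roughgarden procedure. We work with a non-degenerate $v$, so distinct sets have distinct values; by Proposition~\ref{fact:one}, or by a tie-breaking perturbation that preserves every strict inequality, this costs no generality. The algorithm is as follows. As long as the current partition $Y$ is not EFX with respect to $v$, it picks a bundle $Y_k$ of minimum value. Non-EFX-ness means some bundle $Y_\ell$ and some good $g \in Y_\ell$ satisfy $v(Y_\ell\setminus g) > v(Y_k)$. Since $v(Y_\ell) > v(Y_\ell \setminus g) > v(Y_k)$, we have $\ell \neq k$. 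The algorithm then moves $g$ from $Y_\ell$ to $Y_k$, so the new bundles are $Y_\ell \setminus g$ and $Y_k \cup g$. Finally, $PR(X,v)$ outputs the partition reached when no such pair exists, and this output is EFX by definition.

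The key step is a lexicographic potential. Sort the bundle values in increasing order and compare the resulting vectors lexicographically. In one step, bundle $Y_k$, which had the unique minimum value $\mu$, is replaced by two bundles. One is $Y_k\cup g$, with $v(Y_k\cup g) > \mu$ by monotonicity. The other is $Y_\ell\setminus g$, with $v(Y_\ell\setminus g) > \mu$ by the choice of $g$. All other bundles had value at least $\mu$, and by non-degeneracy they are distinct from $Y_k$, so their values exceed $\mu$. Hence after the step every bundle has value strictly greater than $\mu$, and the minimum bundle value strictly increases. The number of partitions is finite, so the minimum value cannot increase forever and the process terminates. I expect this to be the only point needing care: the argument must use the strict minimum, so that the minimum itself increases rather than merely the sorted vector.

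Both conclusions of the lemma now follow from this monotonicity of the minimum. The output satisfies $\min_i v(X_i) \le \min_i v(X'_i)$, because the minimum never decreases. For the equality statement, suppose $X$ is already EFX with respect to $v$. Then the loop performs zero steps, so $X'=X$ and equality holds. Conversely, suppose $X$ is not EFX. Then at least one step is executed, the minimum strictly increases at that step, and it never decreases afterwards, so the inequality is strict. This proves that equality holds if and only if $X$ is already EFX with respect to $v$. If degenerate valuations must be handled directly, I would break ties by a fixed linear extension of $v$ and state the strict increase with respect to that extension. The main obstacle is exactly this strictness issue under ties, which is why I would rely on non-degeneracy.
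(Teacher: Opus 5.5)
Your procedure is the same as the paper's: repeatedly move a witnessing good $g$ from $X_j$ into a least valuable bundle. The paper's proof is no more detailed than yours. However, one step you state is false, precisely at the point you flagged. You claim that every bundle other than $Y_k$ is a set different from $Y_k$, and hence, by non-degeneracy, has value above $\mu$. This fails when $Y_k=\emptyset$ and another bundle is also empty. The two bundles are then the same set, non-degeneracy says nothing, and after the move the other empty bundle still has value $\mu=0$. So two of your claims fail in general: ``the minimum strictly increases at every step'' (your termination argument) and ``the minimum is strictly larger after the first step'' (your proof of the `only if' direction of the equality clause). In fact, without the hypothesis $m\ge n$ the equality clause is false. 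Take $n=3$, goods $a,b$, and $X=(\{a,b\},\emptyset,\emptyset)$. Then $X$ is not EFX, yet PR outputs e.g.\ $(\{b\},\{a\},\emptyset)$, which is EFX and again has minimum $0$. The paper's phrase ``this improves the value of the least valuable bundle'' glosses over the same case. This is harmless for Theorem~\ref{thm:3}, since with $m\le n$ goods, giving each good to a different agent is already EFX.

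The repair uses $m\ge n$ and strict monotonicity. For termination, use the lexicographic potential you introduced and then set aside. A step replaces $Y_k$ (value $\mu$) and $Y_\ell$ (value $>\mu$) by two bundles of value $>\mu$. So the number of bundles of value $\mu$ drops by one, the sorted value vector increases lexicographically, and the minimum never decreases, which gives the inequality. For strictness when $X$ is not EFX, split on $\mu$. If $\mu>0$, no bundle is empty, so the bundles are pairwise distinct sets and the minimum is attained uniquely; your argument then does show that the first step raises it. If $\mu=0$, argue about the output instead. $X'$ is EFX, and an EFX partition with $m\ge n$ has no empty bundle: otherwise some bundle $B$ has at least two goods, and $v(B\setminus g)>v(\emptyset)$ for $g\in B$. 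Hence $\min_i v(X'_i)>0$.

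Finally, non-degeneracy is a genuine hypothesis of the equality clause, not something Proposition~\ref{fact:one} gives you for free; that proposition concerns counterexamples to existence, not a fixed $v$. Take additive $v$ with $v(a)=v(b)=1$, $v(c)=2$ and $v(g)=\varepsilon\in(0,1)$. The partition $(\{a\},\{b\},\{c,g\})$ is not EFX, PR must move $g$ into $\{a\}$ or $\{b\}$, and the result is EFX with the same minimum $1$. So take non-degeneracy from the paper's standing assumption.
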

\begin{proof} We include a proof for completeness. Assume $X$ is not EFX with respect to $v$. Let $X_i$ be the least valuable bundle. Then there is a bundle $X_j$ and a $g \in X_j$ such that $v(X_j \setminus g) > v(X_i)$. Move $g$ from $X_j$ to $X_i$. This improves the value of the least valuable bundle. Continue until the allocation is EFX.  \end{proof}

\begin{lemma}\label{almost done} Let $X=(X_0,X_1,X_2)$ be an allocation that is EFX with respect to the valuation $v_0$.  If either agents $1$ and $2$ find different bundles in $X$ tEFX-feasible or one of these agents finds two bundles tEFX-feasible, there is a tEFX-allocation. \end{lemma}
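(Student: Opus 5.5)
The plan is to reduce everything to a matching argument on the three bundles of $X$. Call a bundle $X_j$ \emph{tEFX-feasible} for agent $i$ if, for every bundle $X_\ell$, either $v_i(X_j) > v_i(X_\ell)$ or $v_i(X_j \cup g) \ge v_i(X_\ell \setminus g)$ for all $g \in X_\ell$. This condition refers only to the partition $X$ and not to who owns the other bundles. Hence any assignment of the three bundles to the three agents in which every agent receives a bundle that is tEFX-feasible for it is a tEFX allocation. So it suffices to find a perfect matching between agents and bundles that uses only feasible pairs.

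\textbf{Step 1 (agent $0$ accepts every bundle).} Since $X$ is EFX with respect to $v_0$, every bundle $X_j$ is EFX-feasible for agent $0$. That is, $v_0(X_j) \ge v_0(X_\ell \setminus g)$ for all $\ell$ and all $g\in X_\ell$. By monotonicity, $v_0(X_j \cup g) \ge v_0(X_j) \ge v_0(X_\ell\setminus g)$. Hence every bundle is tEFX-feasible for agent $0$.

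\textbf{Step 2 (agents $1$ and $2$ each accept some bundle).} For $i \in \{1,2\}$, let $X_{j}$ be a bundle of maximum $v_i$-value. Then for every $\ell$, either $X_\ell = X_j$ or $v_i(X_j) > v_i(X_\ell)$ by non-degeneracy. In the case $\ell = j$ we have $v_i(X_j\cup g)\ge v_i(X_j)\ge v_i(X_j\setminus g)$. So $X_j$ is tEFX-feasible for $i$, and each of agents $1,2$ has at least one feasible bundle.

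\textbf{Step 3 (assembling the matching).} Suppose first that agents $1$ and $2$ have feasible bundles $X_a$ and $X_b$ with $a\ne b$. Give $X_a$ to agent $1$, $X_b$ to agent $2$, and the remaining bundle to agent $0$; by Step 1 this is tEFX. Suppose instead that one agent, say agent $1$, finds two bundles $X_a, X_{a'}$ feasible. By Step 2, agent $2$ has some feasible bundle $X_b$. At least one of $a,a'$ differs from $b$, and we are back in the first case. The symmetric situation for agent $2$ is identical.

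The only point needing care, and what I would check most carefully, is Step 2 together with the observation that tEFX-feasibility of a bundle is a property of the partition alone. In particular, the $\ell=j$ term must not cause trouble, and this is handled by monotonicity as above. Everything else is a trivial Hall-type matching argument on three vertices.
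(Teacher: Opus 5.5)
Your proof is correct and follows the paper's argument. In the first case the paper gives the two distinct feasible bundles to agents $1$ and $2$ and the leftover bundle to agent $0$. In the second case it gives agent $3-i$ her favorite (hence tEFX-feasible) bundle and agent $i$ the other of her two feasible bundles, which is exactly your reduction to the first case; your Steps 1 and 2 only make explicit two facts the paper uses tacitly (every bundle of a $v_0$-EFX partition is tEFX-feasible for agent $0$, and a favorite bundle is tEFX-feasible).
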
 
\begin{proof} Suppose agents~$1$ and~$2$ find different bundles tEFX-feasible; for example, agent~$1$ finds $X_1$ tEFX-feasible while agent~$2$ finds $X_2$ tEFX-feasible. Assigning these bundles accordingly and giving the remaining bundle to agent~$0$ yields a tEFX allocation.

Suppose there exists an agent $i\neq 0$ such that two bundles--without loss of generality $X_1$ and $X_2$--are tEFX-feasible for agent~$i$. Then we assign agent $3-i$ her favorite bundle, which is necessarily tEFX-feasible for her, assign agent~$i$ the remaining bundles among $X_1$ and $X_2$, and give the final bundle to agent~$0$. The resulting allocation is tEFX.
\end{proof}

We will apply the preceding Lemma to outputs of the PR-algorithm. Then all bundles in $X$ are EFX for agent~$0$ and the Lemma applies. The only case not covered by the Lemma is when both agents~$1$ and~$2$ find exactly one bundle tEFX-feasible, and this bundle is the same for both agents. Say it is bundle $X_2$. 

\noindent
\textbf{Remark:} The favorite bundle of an agent is EFX-feasible and tEFX-feasible for the agent. Thus, if only one bundle is tEFX-feasible, it must be the favorite and hence is also EFX-feasible.

We consider the family of all partitions $(X_0,X_1,X_2)$ such that the following invariant holds: $X_0$ and $X_1$ are EFX-feasible for agent~$0$, while $X_2$ is EFX-feasible for at least one of the agents $1$ and $2$. 

\subsection{The High-Level Algorithm} Our algorithm maintains such a partition throughout its execution. Our progress measure is the potential $\phi(X) = \min(v_0(X_0),v_0(X_1))$. Throughout, we assume the bundles are sorted in a way that $v_0(X_0)<v_0(X_1)$ and hence $\phi(X) = v_0(X_0)$. At each step, given a partition $(X_0,X_1,X_2)$ satisfying these invariants, the algorithm either:
\begin{enumerate}
    \item outputs a tEFX or EF1\&EEFX allocation, or \label{case1}
    \item produces another partition $(X'_0,X'_1,X'_2)$ satisfying the same invariants and with strictly larger potential. \label{case2}
\end{enumerate}

We consider four different cases. In most of these cases, we first construct a partition $(Y_0,Y_1,Y_2)$ with potential $\min(v_0(Y_0),v_0(Y_1)) \geq \phi(X)$ and $Y_2$ being EFX-feasible for at least one of the agents $1$ or $2$. If both $Y_0$ and $Y_1$ are EFX-feasible for agent $0$, then set $X' = Y$ (scenario \ref{case2}). Otherwise, we run $PR$ on $Y$ and either obtain a tEFX or EF1\&EEFX allocation (scenario \ref{case1}), or a partition $(X'_0,X'_1,X'_2)$ satisfying the same invariants and with strictly larger potential (scenario \ref{case2}). 

We now analyze the possible cases. Each case assumes that all previous cases do not apply. 

\newcommand{\Invariant}{$X_0$ and $X_1$ are EFX-feasible for agent~$0$, and $X_2$ and only $X_2$ is tEFX-feasible for agents~$1$ and~$2$.}

\subsection{Case 1}
\paragraph{Invariants:} \Invariant\
\paragraph{Assumptions:} 
\begin{enumerate}
    \item There exists $i \in \{1,2\}$ and $g\in X_2$ such that $v_i(X_2\setminus g)>\max(v_i(X_0\cup g),v_i(X_1))$.
\end{enumerate}

Let $Y=(X_0\cup g,X_1,X_2\setminus g)$. The bundle $X_2\setminus g$ is tEFX-feasible for agent $i$ since agent $i$ with this bundle does envy the other two bundles. 
If $X_1$ is no longer EFX-feasible for agent~$0$, let $Y'_0\subseteq X_0\cup g$ be an inclusion-wise minimal subset satisfying $v_0(Y'_0)>v_0(X_1)$. Move the remaining items to the third bundle and define
$Y'=(Y'_0,X_1,(X_2\setminus g)\cup((X_0\cup g)\setminus Y'_0))$.

The third bundle in $Y'$ is a superset of $X_2\setminus g$, while $Y'_0\subseteq X_0\cup g$. Hence, the third bundle remains tEFX-feasible for agent $i$. If both $Y'_0$ and $Y'_1$ are EFX-feasible for agent $0$, the invariants hold for $Y'$ and 
$$\phi(Y') = v_0(Y_1) = v_0(X_1) > v_0(X_0) = \phi(X_1).$$

By minimality of $Y'_0$, agent~$0$ cannot strongly envy one of $Y'_0$ and $X_1$ while holding the other. Therefore, if neither $Y'_0$ nor $X_1$ is EFX-feasible for agent~$0$, we must have $v_0(Y'_2)>v_0(Y'_0)>v_0(X_1)$.

Now let $X'=PR(Y',v_0)$. If agents~$1$ and~$2$ do not both only find the same bundle tEFX-feasible, then by Lemma \ref{almost done}, a tEFX allocation exists. Otherwise, the partition $X'$ preserves the desired invariants and, by Lemma~\ref{PlautRoghgarden}, $\phi(X')>\phi(X)$.

\subsection{Case 2} 
\paragraph{Invariants:} \Invariant\
\paragraph{Assumptions:} 
\begin{enumerate}
    \item Case 1 does not hold, i.e., for all $i \in \{1,2\}$ and all $g\in X_2$: 
$v_i(X_2\setminus g) < \max(v_i(X_0\cup g),v_i(X_1))$.
    \item Assume further that  there exists $i\in\{1,2\}$ such that $v_i(X_0)>v_i(X_1)$.
\end{enumerate}

In this case, allocate $X_0$ to agent~$i$, $X_1$ to agent~$0$, and $X_2$ to agent~$3-i$. It remains to show that $X_0$ is tEFX-feasible for agent~$i$. Since for every $g \in X_2$,
$v_i(X_2\setminus g)<\max(v_i(X_0\cup g),v_i(X_1))$, and 
$v_i(X_0)>v_i(X_1)$, the maximum is attained at $v_i(X_0\cup g)$. Hence $v_i(X_2\setminus g)<v_i(X_0\cup g)$, which implies that $X_0$ is tEFX-feasible for agent~$i$. 

\subsection{Case 3}
\paragraph{Invariants:} \Invariant\
\paragraph{Assumptions:} 
\begin{enumerate}
    \item Case 1 does not hold, i.e., for all $i \in \{1,2\}$ and all $g\in X_2$: 
$v_i(X_2\setminus g) < \max(v_i(X_0\cup g),v_i(X_1))$.
    \item Case 2 does not hold, i.e., for all $i \in \{1,2\}$: $v_i(X_0) < v_i(X_1)$.
    \item There exists $g\in X_2$ such that $v_i(X_2\setminus g)>v_i(X_1)$ for all $i\in\{1,2\}$. Then $v_i(X_2) > v_i(X_2 \setminus g) > v_i(X_1) > v_i(X_0)$ and $v_i(X_2\setminus g) < v_i(X_0 \cup g)$ for $i \in \{1,2\}$. The latter follows from $v_i(X_1) < v_i(X_2 \setminus g) < \max(v_i(X_0\cup g),v_i(X_1))$. 
\end{enumerate}

Define $Y=(X_0\cup g,X_1,X_2\setminus g)$ and let $Y'_0\subseteq X_0\cup g$ be an inclusion-wise minimal subset satisfying
$v_1(Y'_0)>v_1((X_2\setminus g)\cup((X_0\cup g)\setminus Y'_0))$. $Y'_0$ exists since $X_0 \cup g$ satisfies the inequality. One can find such a subset by moving goods from the first bundle to the latter until the latter becomes tEFX-feasible for agent $1$. Let 
$Y'=(Y'_0,X_1,(X_2\setminus g)\cup((X_0\cup g)\setminus Y'_0))$.

Both $Y'_0$ and $Y'_2$ are tEFX-feasible for agent~$1$. In addition, $v_2(Y'_2)\geq v_2(X_2\setminus g)$, so agent~$2$'s favorite bundle is either $Y'_0$ or $Y'_2$.

If $Y'_1 = X_1$ is EFX-feasible for agent~$0$, a tEFX allocation exists: give $Y'_1$ to agent 0, let agent 2 choose among the other bundles, and give agent $1$ the remaining bundle. Otherwise, let $Z\in\{Y'_0,Y'_2\}$ be a bundle that agent~$0$ strongly envies when receiving $X_1$, and let $\bar Z$ denote the other bundle in $\{Y'_0,Y'_2\}$.

Let $Z'\subseteq Z$ be an inclusion-wise minimal subset satisfying $v_0(X_1)<v_0(Z')$. Consider the partition
$X'=(X_1,Z',\bar Z\cup(Z\setminus Z'))$.

If $X'_0=X_1$ is EFX-feasible for agent~$0$, the potential has increased and the invariants continue to hold, since the first two bundles are EFX-feasible for agent~$0$ and the last bundle is tEFX-feasible for agent~$1$. 
If $X_1$ is not EFX-feasible for agent~$0$, we run $PR(X',v_0)$. We either obtain a tEFX allocation or a partition with strictly larger potential after running the $PR$ algorithm. 

\subsection{Remaining Case}

\paragraph{Invariants:} \Invariant\
\paragraph{Assumptions:} 
\begin{enumerate}
    \item Case 1 does not hold, i.e., for all $i \in \{1,2\}$ and all $h\in X_2$: 
$v_i(X_2\setminus h) < \max(v_i(X_0\cup h),v_i(X_1))$.  \label{assumption1}
    \item Case 2 does not hold, i.e., for all $i \in \{1,2\}$: $v_i(X_0) < v_i(X_1)$.  \label{assumption2}
    \item Case 3 does not hold, i.e, for all $i \in \{1,2\}$ and all $h\in X_2$: $v_1(X_2\setminus h) < v_1(X_1)$ or $v_2(X_2 \setminus h) < v_2(X_1)$. \label{assumption3}
\end{enumerate}


By the invariant, $X_1$ is not tEFX-feasible for agents $1$ and $2$. By Assumptions \ref{assumption2} and \ref{assumption3}, $X_1$ is EF1-feasible 
for at least one of agents~$1$ or~$2$. Without loss of generality, suppose this holds for agent~$1$.

If $X_1$ is EEFX-feasible for agent~$1$, then assigning $(X_0,X_1,X_2)$ to agents $(0,1,2)$ gives an EF1\&EEFX allocation.

Suppose instead that $X_1$ is neither tEFX-feasible nor EEFX-feasible for agent~$1$. Then there exists $g\in X_2$ such that $v_1(X_2\setminus g)>v_1(X_1\cup g)$. By Assumption \ref{assumption1}, $\max(v_1(X_0\cup g),v_1(X_1)) > v_1(X_2\setminus g)$. Since $v_1(X_2\setminus g)>v_1(X_1\cup g) > v_1(X_1)$ the maximum is realized at $v_i(X_0\cup g)$. 
Overall, we get 
$$v_1(X_0\cup g) > v_1(X_2\setminus g)>v_1(X_1).$$
Due to Assumption \ref{assumption3}, since $v_1(X_2\setminus g)>v_1(X_1)$, we obtain $v_2(X_1)>v_2(X_2\setminus g)$. This implies that $X_1$ is EF1-feasible for agent~$2$.

Now define $Y=(X_0\cup g,X_1,X_2\setminus g)$. Let $Y'_0\subseteq X_0\cup g$ be an inclusion-wise minimal subset satisfying
$v_1(Y'_0)>v_1((X_2\setminus g)\cup((X_0\cup g)\setminus Y'_0))$,
and define
$Y'=(Y'_0,X_1,(X_2\setminus g)\cup((X_0\cup g)\setminus Y'_0))$.

Both $Y'_0$ and $Y'_2$ are tEFX-feasible for agent~$1$. If agent~$2$'s favorite bundle in $Y'$ is $Y'_1 = X_1$, then $Y'$ serves as a certificate that $X_1$ is EEFX-feasible for agent~$2$. Consequently, the allocation $(X_0,X_2,X_1)$ is EF1\&EEFX.

Otherwise, agent~$2$'s favorite bundle is either $Y'_0$ or $Y'_2$. As in Case~3, if $X_1$ is EFX-feasible for agent~$0$, a tEFX allocation exists: give $X_1$ to agent~$0$, let agent~$2$ choose among $Y'_0$ and $Y'_1$, and give agent~$1$ the remaining bundle. 

Otherwise, let $Z\in\{Y'_0,Y'_2\}$ be a bundle that agent~$0$ strongly envies when receiving $X_1$, and let $\bar Z$ denote the other bundle. Let $Z'\subseteq Z$ be an inclusion-wise minimal subset satisfying $v_0(X_1)<v_0(Z')$. Consider the partition
$X'=(X_1,Z',\bar Z\cup(Z\setminus Z'))$.

If $X'_0=X_1$ is EFX-feasible for agent~$0$, the potential has increased and the invariants continue to hold, since the first two bundles are EFX-feasible for agent~$0$ and the last bundle is tEFX-feasible for agent~$1$. 
If $X_1$ is not EFX-feasible for agent~$0$, we run $PR(X',v_0)$. We either obtain a tEFX allocation or a partition with strictly larger potential after running the $PR$ algorithm. 

\section{Formal Verification of the Foundations and the Encoding} \label{sec:formver}

In this section, we take the constructions described in Sections~\ref{sec:theoefx} and~\ref{sec:encoding} and formally encode them in the Lean theorem prover~\cite{lean4ITP} on top of Lean's mathematics library~\cite{mathlib4}. On the one hand, this is a stand-in for a fully formal write-up of these sections and clarifies their meaning to researchers outside the area of discrete fair divison. On the other hand, it gives the SAT encoding a more formal footing for researchers in fair division. The formalization should also improve confidence in the correctness of our result. We did not verify the C++ and Python code. Instead, we use an abstract model of SAT formulas that corresponds to the theory described in the preceding sections.
Concretely, we formally define and verify three things in Lean:
\begin{itemize}
    \item The definitions and basic theorems of discrete fair divisions relevant to the EFX problem.
    \item A model of SAT formulas that corresponds to the description of Section~\ref{sec:encoding}.
    \item The reduction of the EFX problem to a SAT instance in the aforementioned model, the conceptual optimizations involved, and a proof of soundness of these reductions. 
\end{itemize}
We refrain from naming the specific theorems we prove and what they mean and defer this discussion to the \texttt{README.md} file of the accompanying Lean artifact. We focus on the conceptual roadmap here and recommend reading the Lean formalization in parallel.

\subsection{The Theory of EFX Allocations}\label{sec:LeanTheorySection}
We set up the theory of discrete allocations of goods for general Lean types \inlineLean{Agents} and \inlineLean{Goods}, which correspond to the set of agents and set of goods. In general, these types may have infinitely many objects and undecidable equality relations. 

We first define a structure of valuation functions that encodes both a valuation function and basic properties of this function. We also define an allocation structure \inlineLean{Alloc} that defines an allocation function  allocating a set of goods for each agent and additionally carries a proof of disjointness of the allocations between individuals. To allow wider use of this formalization, we allow partial allocations. Completeness is stated as an additional property that can be included as an assumption in the relevant theorems.

\begin{listing}[h]
\begin{leancode}
    structure ValFunction (Goods : Type*) where
        val : Set Goods → ℝ
        empty : val ∅ = 0
        pos : ∀ s : Set Goods, val s ≥ 0
        mono : ∀ s t : Set Goods,
            s ⊆ t → val s ≤ val t
    
    structure Alloc (Agents Goods : Type*) where
        assign : Agents → Set Goods
        disjoint : ∀ a b : Agents, a ≠ b → assign a ∩ assign b = ∅
    
    def Alloc.complete : Prop :=
      ∀ g : Goods, ∃ a : Agents, g ∈ alloc.assign a
\end{leancode}
\caption{\label{listing:LeanBasicDefs}The basic Lean definitions of the discrete allocation theory. 
} 
\end{listing}
These two definitions come together to give us a specific instance of the fair division problem, an \inlineLean{AllocInstance} (hereon also called \emph{fair division instance}), consisting of a \inlineLean{ValFunction} and a specific allocation \inlineLean{Alloc}. These basic definitions can be seen in Figure~\ref{listing:LeanBasicDefs}. In many theorems, we are concerned about finite sets of agents and goods. In the parts of the formalization that are concerned with the theory, we handle finiteness assumptions by assuming an instance of the \inlineLean{Fintype} typeclass. Instances of this typeclass contain a finite set of all elements in the type and a proof of completeness of this listing. This allows us to retain the maximal amount of generality possible for these types. Where decidability of equality of goods or agents matters, we use instances of the \inlineLean{DecidableEq} typeclass.  Of particular relevance to this project, we formally prove several claims.
\begin{enumerate}
    \item We prove the existence of EFX allocations in non-degenerate instances implies an EFX allocation always exists. This proof follows the proof sketch of Proposition~\ref{fact:one}.
    \item We show the invariance of the EFX property of an allocation instance when its valuation function is substituted by another that preserves the order of value of the sets of goods for each agent. This allows us to claim that the only relevant part of valuation functions w.r.t.~EFX allocations is the ordering of sets of goods by the valuation functions of the agents. 
    \item We also prove the invariance of valuation ordering and the EFX property under bijections of an \\ \inlineLean{AllocInstance} for the  special case of \inlineLean{Agents = Fin n} and \inlineLean{Goods = Fin m}. Here $n$ and $m$ are both natural numbers with $n \le m$. \inlineLean{Fin n} and \inlineLean{Fin m} respectively denote the type of natural numbers bounded strictly from above by $n$ and $m$. This means by reordering the good sets in the specified item ordering of Section~\ref{sec:encoding} and changing the valuations by the inverse of the bijection, we get an EFX-\inlineLean{AllocInstance}, exactly when the original instance is.
    \item In the aforementioned special case, where $m > n$, we also prove that every agent gets a non-empty bundle. This bounds the cardinality of goods each agent can get, at least 1, and at most $m - n + 1$ in a non-degenerate fair division instance that is EFX.
    \item We also define and prove the levelled valuation function of a given valuation and establish its properties.
\end{enumerate}
 In subsequent sections, whenever we need to extract lists of elements from \inlineLean{Fintype} instances to construct our SAT formula, the function \inlineLean{Finset.toList} uses the axiom of choice. In these situations, since the formalizations are specific to this paper, we simply choose to open the \inlineLean{Classical} namespace in the entire file, which automatically adds decidability instances to every proposition in the sense of classical logic, that is to say, the law of excluded middle holds. 
 We also make extensive use of \inlineLean{noncomputable} which is a design decision that makes it possible to use the existing mathlib typeclass API at the expense of not compiling these definitions into executable functions. Thus our formalization does not produce a computable function that generates the SAT formula, merely a mathematical one. This can be easily overridden by using instances of the \inlineLean{FinEnum} typeclass or assigning canonical orders to variables and clauses and using the \inlineLean{Finset.sort} function. We eschew this in favor of clarity and ease of proving theorems.

\subsection{A Theory of SAT for CNF Formulas}\label{sec:SATLeanTheory}

\begin{listing}[t]
\begin{leancode}
    inductive Literal ( α : Type u ) where
      | Pos (x : α)
      | Neg (x : α)
    
    abbrev Clause ( α : Type u ) := List <| Literal α
    
    abbrev CNF ( α : Type u ) := List <| Clause α
    
    structure VarIndex where
      i : Ag
      A : Set G
      B : Set G
    
    noncomputable def semAllocInstanceIndex (valF : Ag → ValFunction G) (v : VarIndex) : Bool :=
      match v with
      | ⟨i, A, B ⟩ => decide <| (valF i).val A < (valF i).val B
    
    def CNF.satWith (c : CNF α) (assign : α → Bool) := c.eval assign = true
\end{leancode}

\caption{\label{listing:LeanSATBasicDefs}The basic model of CNF formulas}
\end{listing}

In this section we describe our model of SAT formulas; see Listing~\ref{listing:LeanSATBasicDefs}.  Concretely, a literal is either a positive or a negative propositional variable. Its type is parameterized by a type $\alpha : \texttt{Type}\; \texttt{u}$ of propositional variable indices. An interpretation of the literal depends on what the propositional variable denotes. A clause is simply a list of literals, and a CNF formula is a list of clauses.
Thus, given an interpretation of the propositional variables, we can interpret the SAT formula built from them. This interpretation of propositional variables is provided by the \inlineLean{semAllocInstanceIndex}
function in the context of a \inlineLean{ValFunction}. Concretely, the variable $x^{i}_{AB}$ corresponds to \inlineLean{VarIndex.mk i A B}. As in the theory, it is interpreted to mean that agent $i$ values the goods set $A$ less than $B$.
Finally, we give our formulas a notion of satisfiability that is dependent on the interpretation of the propositional literals as in \inlineLean{CNF.satWith}.

\subsection{The EFX Problem as a Satisfiability Problem}

\begin{listing}
    \begin{leancode}
        def valOrd (i : Ag) (G₁ G₂ : Set G) : Literal VarIndex := .Pos <| ⟨i, G₁, G₂⟩
        def i_envies_B_after_removing_g (i : Ag) (A B : Set G) (g : G) : Literal VarIndex :=
          .Neg ⟨i, B \ {g}, A ⟩
        
        structure AllocSkeleton where
          alloc : Ag → Set G
          disjoint : ∀ i j, i ≠ j → alloc i ∩ alloc j = ∅
          complete : ∀ g : G, ∃ a : Ag, g ∈ alloc a
          nonempty : ∀ i, alloc i ≠ ∅
          upper_bound : ∀ i, (alloc i).ncard ≤ 5
        
        structure ValSkeleton where
          lt : Set G → Set G → Prop
          lt_refl : ∀ {A}, ¬ lt A A
          lt_nonsymm : ∀ {A B}, A ≠ B → lt A B ↔ ¬ lt B A
          lt_total : ∀ {A B}, A ≠ B → lt A B ∨ lt B A

    
        theorem CNF_someone_envies_someone_forall_alloc_satWith : (someone_envies_someone_forall_alloc).satWith (s⟦·⟧) ↔  ∀ allSkel : AllocSkeleton, ∃ i j, j ≠ i ∧ ∃ g ∈ (allSkel.alloc j), (valF i).lt (allSkel.alloc i) ((allSkel.alloc j) \ {g})
    \end{leancode}
    \caption{\label{listing:LeanSATEncoding} The skeletal allocation and valuation structures, and a sample theorem statement relating the satisfiability of our CNF to the existence of valuation functions for which no allocation is EFX.}
\end{listing}

Some properties of our fair division instance are implicit in the choice of agents and sets of goods we iterate over, when we build the clauses of our formula. This includes for example, the fairly basic property that agents get disjoint sets of goods, and the fact that the allocation is complete. The theory we established in Section~\ref{sec:LeanTheorySection} allows to perform more optimizations over the naive unfolding of the EFX formula, such as limiting the valid cardinality of good sets assigned to each agent. We define two skeletal structures \inlineLean{AllocSkeleton} and \inlineLean{ValSkeleton} that capture some of these basic properties. We use \inlineLean{AllocSkeleton} to filter the list of all possible indices \inlineLean{VarIndex.mk i A B}, to obtain the literals and clauses we need in our SAT formula.  

The clauses themselves encode transitivity, non-EFX-ness, monotonicity, item ordering w.r.t agent 0, and the levelled valuations property. That is, for each of these SAT clauses, if they are satisfiable under a specific assignment of values to the SAT variables, then the corresponding properties hold under the valuation function implied by the assignment. To show the latter, we reconstruct a skeletal valuation function \inlineLean{ValSkeleton} out of the assignment of truth values to the SAT variables. When the SAT formula is satisfied, we prove that the resultant valuation function is also monotone, transitive, levelled, and does not satisfy the EFX property. For the EFX property, this concretely means proving that for any allocation of goods to agents that is a proper allocation skeleton, there exist distinct agents $i$ and $j$ such that $i$ strongly envies $j$. The theorem statement is illustrated in the listing~\ref{listing:LeanSATEncoding}.

So far, we can construct a skeletal non-degenerate EFX-violating fair division instance when the SAT formula is satisfied. The last step is to prove that we can rebuild a proper fair division instance we described in section ~\ref{sec:LeanTheorySection} from this skeletal fair division instance. Concretely this means, defining mappings from each \inlineLean{AllocSkeleton} to an instance of \inlineLean{Alloc} and each instance of a \inlineLean{ValSkeleton} to \inlineLean{ValFunction}. Mapping \inlineLean{AllocSkeleton} to \inlineLean{Alloc} is fairly trivial since the assignment function remains unchanged and most of the proofs of the properties just carry over. Valuation functions are constructed as rank functions of the order of \inlineLean{ValSkeleton}. Note that \inlineLean{ValSkeleton}s themselves need not be transitive or monotonic. Transitivity and monotonicity are additional properties whose proof we obtained from the satisfiability of the SAT formula when building the skeletal fair division instance. Thus we need to prove that when the skeletal valuation is monotonic and transitive, the rank function preserves these properties. We demonstrate the correctness of our formulation by proving a theorem that shows we can construct a counterexample, starting from an assumption that our formula is satisfiable.

We also need to handle the unsatisfiable case for our formula. It is fairly trivial to prove that if our SAT formula is not satisfiable, then there is no non-degenerate, levelled-valuation, item-ordered, skeletal fair division instance that violates the EFX property. Here, from our previous theoretical development in Lean, we know that if an EFX allocation exists, then there also exists an allocation instance that is non-degenerate, with levelled valuations, and with the specified item ordering w.r.t agent 0. We use the contrapositive version of these statements. Thus we show that there is no fair division instance that is an EFX-counterexample if the formula is unsatisfiable.   

\section{An SMT Encoding}\label{SMT encoding}
The approach presented in Section~\ref{sec:encoding} translates the problem into a Boolean satisfiability (SAT) question. Encoding EFX involves representing valuations and their comparison succinctly. This suggests an encoding as a Satisfiability
Modulo Theory (SMT) problem over a theory that supports numbers and ordering natively. We use quantifier-free Linear Real Arithmetic (LRA) \cite{KroeningStrichmanDecisionProcedures,BradleyMannaCalculusOfComputation,DeMouraDutertreLADPLLT} as the theory.
Our formulation in LRA works as follows. For each agent $i$ and set $A$, we introduce a (real-valued) variable $v^i_{A}$. The value of $ v^i_{A} $ is the valuation of the set $A$ for the agent $i$. We have the following constraints:
\begin{itemize}
\item \emph{Positivity:} $ v^i_{A} \ge 0 $ for all $ i $ and $ A \subseteq [m] $
\item \emph{Monotonicity:} For all bundles of goods $ A, B \subseteq [m] $ such that $ A \subset B $, $ v^i_{A} < v^i_{B} $ for every agent $ i $
\item \emph{Item Order:} For agent zero $v^0_{\sset{j}} < v^0_{\sset{k}}$ for $0 \le j < k \le m$
\item \emph{EFX:} There exists an allocation $(A_0, A_1, A_2)$ such that for all pairs of agents $i$, $j$ and $g \in A_j$ we have $v^i_{ A_j \setminus g} < v^i_{ A_i} $. More precisely, we express this as
        \[ \EFX := \bigvee_{(A_0,A_1,A_2)} \bigwedge_{\substack{i, j\\i\neq j}} \bigwedge_{g \in A_j} v^i_{A_j \setminus g} < v^i_{ A_i} \]
        where the disjunction ranges over all possible ordered partitions of $ [m] $ into 3 subsets.
        \end{itemize}

The existence of an EFX allocation is equivalent to the unsatisfiability of $\text{Positivity} \wedge \text{Monotonicity} \wedge \text{Item Order} \wedge \neg \EFX$.
The formulation in LRA is much more compact than the formulation as in SAT. For $ m = 7 $, the EFX constraint consists of 1806 disjuncts, each consisting of 14 conjunctions of inequalities, summing up to 25284 inequalities.
Represented efficiently as a Boolean circuit, we get an overall size of under 15000 nodes.

\begin{figure}[t]
\begin{center}
\begin{tabular}{|c|c|c|c|c|} \hline
Statistic & $ m=4 $ & $ m=5 $ & $ m=6 $ & $ m=7 $  \\ \hline \hline
Instance size & 403 & 1300 & 4177 & 13234  \\ \hline
Running time (s) & 0.12 & 0.10 & 25.82 & fail \\ \hline
Memory usage (MB) & 30.5 & 33.4 & 51.1 & fail \\ \hline
Result & unsat & unsat & unsat & unknown \\ \hline
\end{tabular}
\end{center}
\caption{Results of running the Z3 theorem prover on the LRA instances encoding existence of EFX for 3 agents and $ m \in \{4,5,6,7\} $ goods. For the case $m=7$, Z3 did not produce any result in 40 hours.}
\label{fig:smt_z3_results}
\end{figure}

We implemented the translation of the above constraints into SMTLIB~\cite{BarFT-SMTLIB} format. Then we ran the Z3 theorem prover version~4.15.4 \cite{Moura:TACAS08-337} on them.
The results are in Figure~\ref{fig:smt_z3_results}.

\section{The Companion Material}\label{Companion Material}
The artifacts of this paper can be accessed at \url{https://nextcloud.mpi-klsb.mpg.de/index.php/s/25x4Q8eQErYsZE4} (189MB) for the SAT encoding, \url{https://nextcloud.mpi-klsb.mpg.de/index.php/s/XEoGiPXSL6MJpyr} (18.5GB) for the ziped drat proof for three agents and seven
goods, and at \url{htps://zenodo.org/records/18637095} for the LEAN development.

The first archive contains:
\begin{itemize}
  \item Instances for three agents in DIMACS format. EFXn.*cnf specifies an instance with $n$ goods, EFX*.kn*cnf specifies an instance for $n$ goods and leveling starting at $k$ (leveling units are included), and EFX*.*i*cnf also includes item order. EFX<s1>.<s2>rcnf contains the cnf of EFX<s1>.<s2>cnf after reduction by SPASS-SAT -u, EFX8.model contains the model for EFX8.cnf, and 
EFX8.srcnf is the EF8.cnf file after setting of certain variables and after reduction by SPASS-SAT -u.
\item SPASS-SAT, a  static linux x86 binary of the sat solver,
\item BasicGen.cpp, the C++ generator for the DIMACS files. Setting $m$, $k$, $P=2^m$, in the source, compiling and executing will generate the respective DIMACS file
\item SPASS-SAT -x EFC8.mcnf  translates the model into a valuation, prints the valuation and checks
  it. For every assignments, it prints  where EFX is violated.
\item Val0/1/2ByCard.txt contain the valuations sorted in increasing order of cardinality and lexicographically for fixed cardinality. 
\end{itemize}

The third archive contains a LEAN project accompanying Section~\ref{sec:formver} and Python script lra\_rdl.py that generates SMTLIB input and calls Z3.
                   The parameters for the number $n$ of agents and the number $m$ of goods can be adapted at the end of the script.

\section{Conclusions and Open Problems}\label{open problems}

We have shown that EFX exists for three agents and seven goods but does not always exist for three agents and eight or more goods. We prove the existence of relaxations of EFX allocation for three agents.
We close with some open problems and suggestions for further work.
\begin{itemize}
\item For three agents and sufficiently many goods, EFX exists for two monotone and one MMS-feasible valuation and does not exist for three monotone valuations. Can one replace MMS-feasible by a more general notion and still have existence?
\item When do EFX allocations exist for four or more agents? Do they exist for additive valuations? Do they exist for $m = n + 4$ goods and monotone valuations? Note that they exist for $m = n + 3$ goods~\cite{Mahara-EFX} and do not exist for $m = n + 5$ goods (Theorem \ref{thm:2}). 
\item While Theorem~\ref{thm:3} establishes that, for three agents, there always exists an allocation that is either tEFX or EF1\&EEFX, it remains open whether either guarantee can always be achieved individually, even for three agents. For additive valuations, the existence of EF1\&EEFX allocations is known~\cite{EF1+EEFX}, whereas the existence of tEFX allocations remains open even in the additive setting. Our counterexample to EFX does not rule out either of these two relaxed notions.

\item We found our counterexample for eight items with the help of a SAT-solver and 20 hours of compute time. 
Verification of the counterexample takes only a few seconds. 
We suspect that a counterexample to tEFX or EEFX\&EF1, if it exists, will have many more items, and so an alternative way of identifying difficult instances would be useful. We note that, subsequent to our work, \cite{mackenzie2026} presented a human-verifiable counterexample to EFX existence.
\end{itemize}

\paragraph{Acknowledgements.} We thank Bhaskar Ray Chaudhury for pointing out Proposition \ref{prop:submodular} to us several years ago, and Uriel Feige for independently mentioning the same observation in private communications. We thank Henrik B\"{o}ving of the LEAN FRO for discussions on Lean library internals and integration with CadiCal. Additionally, we acknowledge that Shreyas Srinivas is a doctoral student at the Saarbrücken Graduate School for
Computer Science

\bibliographystyle{alpha}
\bibliography{ref.bib}

\appendix

\section{The Valuations}\label{Valuations}

Tables~\ref{val0}, \ref{val1}, and \ref{val2} show the three valuations.

\begin{table}\scriptsize
  \centering \begin{tabular}{|| r | r |r || r | r | r ||r | r |r || r | r | r || r | r | r||}
    
 0 & 00000000 & 0 & 52 & 00110100 & 70 & 104 & 01101000 & 75 & 156 & 10011100 & 143 & 208 & 11010000 & 51 \\
 1 & 00000001 & 1 & 53 & 00110101 & 114 & 105 & 01101001 & 109 & 157 & 10011101 & 185 & 209 & 11010001 & 160 \\
 2 & 00000010 & 2 & 54 & 00110110 & 138 & 106 & 01101010 & 117 & 158 & 10011110 & 186 & 210 & 11010010 & 153 \\
 3 & 00000011 & 8 & 55 & 00110111 & 170 & 107 & 01101011 & 167 & 159 & 10011111 & 226 & 211 & 11010011 & 202 \\
 4 & 00000100 & 3 & 56 & 00111000 & 72 & 108 & 01101100 & 102 & 160 & 10100000 & 22 & 212 & 11010100 & 119 \\
 5 & 00000101 & 54 & 57 & 00111001 & 106 & 109 & 01101101 & 127 & 161 & 10100001 & 49 & 213 & 11010101 & 203 \\
 6 & 00000110 & 77 & 58 & 00111010 & 84 & 110 & 01101110 & 178 & 162 & 10100010 & 93 & 214 & 11010110 & 207 \\
 7 & 00000111 & 96 & 59 & 00111011 & 171 & 111 & 01101111 & 221 & 163 & 10100011 & 128 & 215 & 11010111 & 233 \\
 8 & 00001000 & 4 & 60 & 00111100 & 107 & 112 & 01110000 & 74 & 164 & 10100100 & 34 & 216 & 11011000 & 141 \\
 9 & 00001001 & 10 & 61 & 00111101 & 156 & 113 & 01110001 & 115 & 165 & 10100101 & 139 & 217 & 11011001 & 204 \\
 10 & 00001010 & 37 & 62 & 00111110 & 172 & 114 & 01110010 & 116 & 166 & 10100110 & 165 & 218 & 11011010 & 205 \\
 11 & 00001011 & 55 & 63 & 00111111 & 219 & 115 & 01110011 & 179 & 167 & 10100111 & 187 & 219 & 11011011 & 234 \\
 12 & 00001100 & 29 & 64 & 01000000 & 7 & 116 & 01110100 & 103 & 168 & 10101000 & 95 & 220 & 11011100 & 208 \\
 13 & 00001101 & 57 & 65 & 01000001 & 27 & 117 & 01110101 & 159 & 169 & 10101001 & 135 & 221 & 11011101 & 235 \\
 14 & 00001110 & 80 & 66 & 01000010 & 25 & 118 & 01110110 & 180 & 170 & 10101010 & 147 & 222 & 11011110 & 236 \\
 15 & 00001111 & 146 & 67 & 01000011 & 28 & 119 & 01110111 & 222 & 171 & 10101011 & 188 & 223 & 11011111 & 249 \\
 16 & 00010000 & 5 & 68 & 01000100 & 13 & 120 & 01111000 & 118 & 172 & 10101100 & 150 & 224 & 11100000 & 35 \\
 17 & 00010001 & 9 & 69 & 01000101 & 67 & 121 & 01111001 & 151 & 173 & 10101101 & 189 & 225 & 11100001 & 120 \\
 18 & 00010010 & 15 & 70 & 01000110 & 90 & 122 & 01111010 & 181 & 174 & 10101110 & 190 & 226 & 11100010 & 206 \\
 19 & 00010011 & 16 & 71 & 01000111 & 142 & 123 & 01111011 & 223 & 175 & 10101111 & 227 & 227 & 11100011 & 209 \\
 20 & 00010100 & 40 & 72 & 01001000 & 63 & 124 & 01111100 & 157 & 176 & 10110000 & 86 & 228 & 11100100 & 162 \\
 21 & 00010101 & 61 & 73 & 01001001 & 76 & 125 & 01111101 & 224 & 177 & 10110001 & 137 & 229 & 11100101 & 210 \\
 22 & 00010110 & 87 & 74 & 01001010 & 81 & 126 & 01111110 & 225 & 178 & 10110010 & 191 & 230 & 11100110 & 211 \\
 23 & 00010111 & 129 & 75 & 01001011 & 98 & 127 & 01111111 & 247 & 179 & 10110011 & 192 & 231 & 11100111 & 237 \\
 24 & 00011000 & 11 & 76 & 01001100 & 88 & 128 & 10000000 & 19 & 180 & 10110100 & 124 & 232 & 11101000 & 166 \\
 25 & 00011001 & 12 & 77 & 01001101 & 99 & 129 & 10000001 & 47 & 181 & 10110101 & 193 & 233 & 11101001 & 212 \\
 26 & 00011010 & 56 & 78 & 01001110 & 123 & 130 & 10000010 & 26 & 182 & 10110110 & 194 & 234 & 11101010 & 213 \\
 27 & 00011011 & 92 & 79 & 01001111 & 173 & 131 & 10000011 & 53 & 183 & 10110111 & 228 & 235 & 11101011 & 238 \\
 28 & 00011100 & 42 & 80 & 01010000 & 50 & 132 & 10000100 & 20 & 184 & 10111000 & 164 & 236 & 11101100 & 214 \\
 29 & 00011101 & 105 & 81 & 01010001 & 78 & 133 & 10000101 & 62 & 185 & 10111001 & 195 & 237 & 11101101 & 239 \\
 30 & 00011110 & 148 & 82 & 01010010 & 52 & 134 & 10000110 & 89 & 186 & 10111010 & 196 & 238 & 11101110 & 240 \\
 31 & 00011111 & 168 & 83 & 01010011 & 82 & 135 & 10000111 & 149 & 187 & 10111011 & 229 & 239 & 11101111 & 250 \\
 32 & 00100000 & 6 & 84 & 01010100 & 58 & 136 & 10001000 & 38 & 188 & 10111100 & 197 & 240 & 11110000 & 121 \\
 33 & 00100001 & 32 & 85 & 01010101 & 94 & 137 & 10001001 & 48 & 189 & 10111101 & 230 & 241 & 11110001 & 215 \\
 34 & 00100010 & 44 & 86 & 01010110 & 140 & 138 & 10001010 & 97 & 190 & 10111110 & 231 & 242 & 11110010 & 216 \\
 35 & 00100011 & 45 & 87 & 01010111 & 174 & 139 & 10001011 & 163 & 191 & 10111111 & 248 & 243 & 11110011 & 241 \\
 36 & 00100100 & 14 & 88 & 01011000 & 65 & 140 & 10001100 & 43 & 192 & 11000000 & 24 & 244 & 11110100 & 217 \\
 37 & 00100101 & 73 & 89 & 01011001 & 100 & 141 & 10001101 & 126 & 193 & 11000001 & 60 & 245 & 11110101 & 242 \\
 38 & 00100110 & 91 & 90 & 01011010 & 85 & 142 & 10001110 & 152 & 194 & 11000010 & 31 & 246 & 11110110 & 243 \\
 39 & 00100111 & 122 & 91 & 01011011 & 175 & 143 & 10001111 & 182 & 195 & 11000011 & 158 & 247 & 11110111 & 251 \\
 40 & 00101000 & 23 & 92 & 01011100 & 104 & 144 & 10010000 & 21 & 196 & 11000100 & 36 & 248 & 11111000 & 218 \\
 41 & 00101001 & 79 & 93 & 01011101 & 155 & 145 & 10010001 & 59 & 197 & 11000101 & 113 & 249 & 11111001 & 244 \\
 42 & 00101010 & 64 & 94 & 01011110 & 176 & 146 & 10010010 & 30 & 198 & 11000110 & 130 & 250 & 11111010 & 245 \\
 43 & 00101011 & 108 & 95 & 01011111 & 220 & 147 & 10010011 & 161 & 199 & 11000111 & 198 & 251 & 11111011 & 252 \\
 44 & 00101100 & 46 & 96 & 01100000 & 17 & 148 & 10010100 & 41 & 200 & 11001000 & 66 & 252 & 11111100 & 246 \\
 45 & 00101101 & 110 & 97 & 01100001 & 33 & 149 & 10010101 & 144 & 201 & 11001001 & 132 & 253 & 11111101 & 253 \\
 46 & 00101110 & 131 & 98 & 01100010 & 83 & 150 & 10010110 & 145 & 202 & 11001010 & 134 & 254 & 11111110 & 254 \\
 47 & 00101111 & 169 & 99 & 01100011 & 112 & 151 & 10010111 & 183 & 203 & 11001011 & 199 & 255 & 11111111 & 255 \\
 48 & 00110000 & 68 & 100 & 01100100 & 18 & 152 & 10011000 & 39 & 204 & 11001100 & 133 \\
 49 & 00110001 & 69 & 101 & 01100101 & 101 & 153 & 10011001 & 111 & 205 & 11001101 & 200 \\
 50 & 00110010 & 71 & 102 & 01100110 & 125 & 154 & 10011010 & 154 & 206 & 11001110 & 201 \\
 51 & 00110011 & 136 & 103 & 01100111 & 177 & 155 & 10011011 & 184 & 207 & 11001111 & 232 
             \end{tabular}\smallskip
             
  \caption{\label{val0} The valuation of the agent zero.}
\end{table}

\begin{table}\scriptsize
  \begin{tabular}{|| r | r |r || r | r | r ||r | r |r || r | r | r || r | r | r||}
    
 0 & 00000000 & 0 & 52 & 00110100 & 68 & 104 & 01101000 & 33 & 156 & 10011100 & 154 & 208 & 11010000 & 55 \\
 1 & 00000001 & 3 & 53 & 00110101 & 97 & 105 & 01101001 & 117 & 157 & 10011101 & 170 & 209 & 11010001 & 146 \\
 2 & 00000010 & 2 & 54 & 00110110 & 161 & 106 & 01101010 & 137 & 158 & 10011110 & 195 & 210 & 11010010 & 144 \\
 3 & 00000011 & 10 & 55 & 00110111 & 182 & 107 & 01101011 & 174 & 159 & 10011111 & 226 & 211 & 11010011 & 209 \\
 4 & 00000100 & 6 & 56 & 00111000 & 31 & 108 & 01101100 & 124 & 160 & 10100000 & 35 & 212 & 11010100 & 101 \\
 5 & 00000101 & 8 & 57 & 00111001 & 95 & 109 & 01101101 & 163 & 161 & 10100001 & 58 & 213 & 11010101 & 179 \\
 6 & 00000110 & 34 & 58 & 00111010 & 84 & 110 & 01101110 & 189 & 162 & 10100010 & 103 & 214 & 11010110 & 210 \\
 7 & 00000111 & 53 & 59 & 00111011 & 183 & 111 & 01101111 & 221 & 163 & 10100011 & 135 & 215 & 11010111 & 233 \\
 8 & 00001000 & 14 & 60 & 00111100 & 123 & 112 & 01110000 & 32 & 164 & 10100100 & 37 & 216 & 11011000 & 136 \\
 9 & 00001001 & 16 & 61 & 00111101 & 167 & 113 & 01110001 & 108 & 165 & 10100101 & 130 & 217 & 11011001 & 176 \\
 10 & 00001010 & 40 & 62 & 00111110 & 184 & 114 & 01110010 & 127 & 166 & 10100110 & 159 & 218 & 11011010 & 211 \\
 11 & 00001011 & 54 & 63 & 00111111 & 219 & 115 & 01110011 & 190 & 167 & 10100111 & 196 & 219 & 11011011 & 234 \\
 12 & 00001100 & 44 & 64 & 01000000 & 5 & 116 & 01110100 & 94 & 168 & 10101000 & 91 & 220 & 11011100 & 177 \\
 13 & 00001101 & 46 & 65 & 01000001 & 45 & 117 & 01110101 & 168 & 169 & 10101001 & 125 & 221 & 11011101 & 235 \\
 14 & 00001110 & 51 & 66 & 01000010 & 41 & 118 & 01110110 & 191 & 170 & 10101010 & 139 & 222 & 11011110 & 236 \\
 15 & 00001111 & 150 & 67 & 01000011 & 61 & 119 & 01110111 & 222 & 171 & 10101011 & 197 & 223 & 11011111 & 249 \\
 16 & 00010000 & 1 & 68 & 01000100 & 7 & 120 & 01111000 & 111 & 172 & 10101100 & 115 & 224 & 11100000 & 39 \\
 17 & 00010001 & 12 & 69 & 01000101 & 49 & 121 & 01111001 & 166 & 173 & 10101101 & 198 & 225 & 11100001 & 143 \\
 18 & 00010010 & 9 & 70 & 01000110 & 89 & 122 & 01111010 & 192 & 174 & 10101110 & 199 & 226 & 11100010 & 140 \\
 19 & 00010011 & 13 & 71 & 01000111 & 155 & 123 & 01111011 & 223 & 175 & 10101111 & 227 & 227 & 11100011 & 212 \\
 20 & 00010100 & 64 & 72 & 01001000 & 27 & 124 & 01111100 & 165 & 176 & 10110000 & 88 & 228 & 11100100 & 152 \\
 21 & 00010101 & 67 & 73 & 01001001 & 48 & 125 & 01111101 & 224 & 177 & 10110001 & 109 & 229 & 11100101 & 213 \\
 22 & 00010110 & 69 & 74 & 01001010 & 80 & 126 & 01111110 & 225 & 178 & 10110010 & 200 & 230 & 11100110 & 214 \\
 23 & 00010111 & 112 & 75 & 01001011 & 99 & 127 & 01111111 & 247 & 179 & 10110011 & 201 & 231 & 11100111 & 237 \\
 24 & 00011000 & 15 & 76 & 01001100 & 85 & 128 & 10000000 & 20 & 180 & 10110100 & 141 & 232 & 11101000 & 158 \\
 25 & 00011001 & 17 & 77 & 01001101 & 102 & 129 & 10000001 & 22 & 181 & 10110101 & 202 & 233 & 11101001 & 172 \\
 26 & 00011010 & 76 & 78 & 01001110 & 93 & 130 & 10000010 & 21 & 182 & 10110110 & 203 & 234 & 11101010 & 215 \\
 27 & 00011011 & 113 & 79 & 01001111 & 175 & 131 & 10000011 & 23 & 183 & 10110111 & 228 & 235 & 11101011 & 238 \\
 28 & 00011100 & 66 & 80 & 01010000 & 24 & 132 & 10000100 & 25 & 184 & 10111000 & 128 & 236 & 11101100 & 216 \\
 29 & 00011101 & 106 & 81 & 01010001 & 47 & 133 & 10000101 & 26 & 185 & 10111001 & 204 & 237 & 11101101 & 239 \\
 30 & 00011110 & 120 & 82 & 01010010 & 42 & 134 & 10000110 & 71 & 186 & 10111010 & 205 & 238 & 11101110 & 240 \\
 31 & 00011111 & 180 & 83 & 01010011 & 119 & 135 & 10000111 & 153 & 187 & 10111011 & 229 & 239 & 11101111 & 250 \\
 32 & 00100000 & 4 & 84 & 01010100 & 65 & 136 & 10001000 & 62 & 188 & 10111100 & 178 & 240 & 11110000 & 100 \\
 33 & 00100001 & 56 & 85 & 01010101 & 107 & 137 & 10001001 & 74 & 189 & 10111101 & 230 & 241 & 11110001 & 171 \\
 34 & 00100010 & 75 & 86 & 01010110 & 157 & 138 & 10001010 & 104 & 190 & 10111110 & 231 & 242 & 11110010 & 217 \\
 35 & 00100011 & 82 & 87 & 01010111 & 185 & 139 & 10001011 & 134 & 191 & 10111111 & 248 & 243 & 11110011 & 241 \\
 36 & 00100100 & 18 & 88 & 01011000 & 28 & 140 & 10001100 & 72 & 192 & 11000000 & 36 & 244 & 11110100 & 218 \\
 37 & 00100101 & 86 & 89 & 01011001 & 110 & 141 & 10001101 & 126 & 193 & 11000001 & 60 & 245 & 11110101 & 242 \\
 38 & 00100110 & 83 & 90 & 01011010 & 87 & 142 & 10001110 & 162 & 194 & 11000010 & 43 & 246 & 11110110 & 243 \\
 39 & 00100111 & 129 & 91 & 01011011 & 186 & 143 & 10001111 & 169 & 195 & 11000011 & 147 & 247 & 11110111 & 251 \\
 40 & 00101000 & 29 & 92 & 01011100 & 96 & 144 & 10010000 & 50 & 196 & 11000100 & 38 & 248 & 11111000 & 173 \\
 41 & 00101001 & 63 & 93 & 01011101 & 164 & 145 & 10010001 & 77 & 197 & 11000101 & 131 & 249 & 11111001 & 244 \\
 42 & 00101010 & 79 & 94 & 01011110 & 187 & 146 & 10010010 & 52 & 198 & 11000110 & 138 & 250 & 11111010 & 245 \\
 43 & 00101011 & 114 & 95 & 01011111 & 220 & 147 & 10010011 & 121 & 199 & 11000111 & 206 & 251 & 11111011 & 252 \\
 44 & 00101100 & 78 & 96 & 01100000 & 11 & 148 & 10010100 & 70 & 200 & 11001000 & 92 & 252 & 11111100 & 246 \\
 45 & 00101101 & 133 & 97 & 01100001 & 59 & 149 & 10010101 & 132 & 201 & 11001001 & 116 & 253 & 11111101 & 253 \\
 46 & 00101110 & 151 & 98 & 01100010 & 90 & 150 & 10010110 & 148 & 202 & 11001010 & 149 & 254 & 11111110 & 254 \\
 47 & 00101111 & 181 & 99 & 01100011 & 98 & 151 & 10010111 & 193 & 203 & 11001011 & 207 & 255 & 11111111 & 255 \\
 48 & 00110000 & 30 & 100 & 01100100 & 19 & 152 & 10011000 & 73 & 204 & 11001100 & 118 \\
 49 & 00110001 & 57 & 101 & 01100101 & 105 & 153 & 10011001 & 122 & 205 & 11001101 & 156 \\
 50 & 00110010 & 81 & 102 & 01100110 & 160 & 154 & 10011010 & 142 & 206 & 11001110 & 208 \\
 51 & 00110011 & 145 & 103 & 01100111 & 188 & 155 & 10011011 & 194 & 207 & 11001111 & 232 
    \end{tabular}\smallskip
  \caption{\label{val1} The valuation of agent one. }
\end{table}

\begin{table}\scriptsize
  \begin{tabular}{|| r | r |r || r | r | r ||r | r |r || r | r | r || r | r | r||}
    
 0 & 00000000 & 0 & 52 & 00110100 & 70 & 104 & 01101000 & 29 & 156 & 10011100 & 149 & 208 & 11010000 & 46 \\
 1 & 00000001 & 12 & 53 & 00110101 & 103 & 105 & 01101001 & 114 & 157 & 10011101 & 173 & 209 & 11010001 & 140 \\
 2 & 00000010 & 22 & 54 & 00110110 & 161 & 106 & 01101010 & 131 & 158 & 10011110 & 194 & 210 & 11010010 & 138 \\
 3 & 00000011 & 25 & 55 & 00110111 & 187 & 107 & 01101011 & 165 & 159 & 10011111 & 226 & 211 & 11010011 & 175 \\
 4 & 00000100 & 2 & 56 & 00111000 & 15 & 108 & 01101100 & 110 & 160 & 10100000 & 67 & 212 & 11010100 & 92 \\
 5 & 00000101 & 20 & 57 & 00111001 & 102 & 109 & 01101101 & 162 & 161 & 10100001 & 68 & 213 & 11010101 & 174 \\
 6 & 00000110 & 30 & 58 & 00111010 & 105 & 110 & 01101110 & 189 & 162 & 10100010 & 75 & 214 & 11010110 & 207 \\
 7 & 00000111 & 69 & 59 & 00111011 & 164 & 111 & 01101111 & 221 & 163 & 10100011 & 148 & 215 & 11010111 & 233 \\
 8 & 00001000 & 5 & 60 & 00111100 & 98 & 112 & 01110000 & 28 & 164 & 10100100 & 76 & 216 & 11011000 & 132 \\
 9 & 00001001 & 54 & 61 & 00111101 & 168 & 113 & 01110001 & 126 & 165 & 10100101 & 130 & 217 & 11011001 & 185 \\
 10 & 00001010 & 79 & 62 & 00111110 & 181 & 114 & 01110010 & 124 & 166 & 10100110 & 158 & 218 & 11011010 & 208 \\
 11 & 00001011 & 87 & 63 & 00111111 & 219 & 115 & 01110011 & 182 & 167 & 10100111 & 195 & 219 & 11011011 & 234 \\
 12 & 00001100 & 36 & 64 & 01000000 & 1 & 116 & 01110100 & 146 & 168 & 10101000 & 93 & 220 & 11011100 & 179 \\
 13 & 00001101 & 66 & 65 & 01000001 & 16 & 117 & 01110101 & 169 & 169 & 10101001 & 153 & 221 & 11011101 & 235 \\
 14 & 00001110 & 88 & 66 & 01000010 & 34 & 118 & 01110110 & 190 & 170 & 10101010 & 136 & 222 & 11011110 & 236 \\
 15 & 00001111 & 112 & 67 & 01000011 & 35 & 119 & 01110111 & 222 & 171 & 10101011 & 196 & 223 & 11011111 & 249 \\
 16 & 00010000 & 3 & 68 & 01000100 & 63 & 120 & 01111000 & 99 & 172 & 10101100 & 125 & 224 & 11100000 & 78 \\
 17 & 00010001 & 41 & 69 & 01000101 & 73 & 121 & 01111001 & 167 & 173 & 10101101 & 197 & 225 & 11100001 & 154 \\
 18 & 00010010 & 49 & 70 & 01000110 & 85 & 122 & 01111010 & 191 & 174 & 10101110 & 198 & 226 & 11100010 & 137 \\
 19 & 00010011 & 64 & 71 & 01000111 & 127 & 123 & 01111011 & 223 & 175 & 10101111 & 227 & 227 & 11100011 & 209 \\
 20 & 00010100 & 24 & 72 & 01001000 & 8 & 124 & 01111100 & 166 & 176 & 10110000 & 84 & 228 & 11100100 & 101 \\
 21 & 00010101 & 44 & 73 & 01001001 & 56 & 125 & 01111101 & 224 & 177 & 10110001 & 96 & 229 & 11100101 & 210 \\
 22 & 00010110 & 65 & 74 & 01001010 & 83 & 126 & 01111110 & 225 & 178 & 10110010 & 170 & 230 & 11100110 & 211 \\
 23 & 00010111 & 104 & 75 & 01001011 & 108 & 127 & 01111111 & 247 & 179 & 10110011 & 199 & 231 & 11100111 & 237 \\
 24 & 00011000 & 6 & 76 & 01001100 & 89 & 128 & 10000000 & 4 & 180 & 10110100 & 147 & 232 & 11101000 & 133 \\
 25 & 00011001 & 58 & 77 & 01001101 & 97 & 129 & 10000001 & 19 & 181 & 10110101 & 184 & 233 & 11101001 & 212 \\
 26 & 00011010 & 82 & 78 & 01001110 & 150 & 130 & 10000010 & 23 & 182 & 10110110 & 200 & 234 & 11101010 & 213 \\
 27 & 00011011 & 95 & 79 & 01001111 & 171 & 131 & 10000011 & 26 & 183 & 10110111 & 228 & 235 & 11101011 & 238 \\
 28 & 00011100 & 37 & 80 & 01010000 & 7 & 132 & 10000100 & 11 & 184 & 10111000 & 135 & 236 & 11101100 & 214 \\
 29 & 00011101 & 115 & 81 & 01010001 & 50 & 133 & 10000101 & 21 & 185 & 10111001 & 201 & 237 & 11101101 & 239 \\
 30 & 00011110 & 117 & 82 & 01010010 & 51 & 134 & 10000110 & 90 & 186 & 10111010 & 202 & 238 & 11101110 & 240 \\
 31 & 00011111 & 186 & 83 & 01010011 & 116 & 135 & 10000111 & 144 & 187 & 10111011 & 229 & 239 & 11101111 & 250 \\
 32 & 00100000 & 9 & 84 & 01010100 & 86 & 136 & 10001000 & 17 & 188 & 10111100 & 203 & 240 & 11110000 & 155 \\
 33 & 00100001 & 18 & 85 & 01010101 & 120 & 137 & 10001001 & 62 & 189 & 10111101 & 230 & 241 & 11110001 & 215 \\
 34 & 00100010 & 40 & 86 & 01010110 & 156 & 138 & 10001010 & 94 & 190 & 10111110 & 231 & 242 & 11110010 & 216 \\
 35 & 00100011 & 60 & 87 & 01010111 & 178 & 139 & 10001011 & 145 & 191 & 10111111 & 248 & 243 & 11110011 & 241 \\
 36 & 00100100 & 53 & 88 & 01011000 & 10 & 140 & 10001100 & 39 & 192 & 11000000 & 45 & 244 & 11110100 & 217 \\
 37 & 00100101 & 61 & 89 & 01011001 & 72 & 141 & 10001101 & 100 & 193 & 11000001 & 91 & 245 & 11110101 & 242 \\
 38 & 00100110 & 71 & 90 & 01011010 & 111 & 142 & 10001110 & 152 & 194 & 11000010 & 47 & 246 & 11110110 & 243 \\
 39 & 00100111 & 157 & 91 & 01011011 & 177 & 143 & 10001111 & 192 & 195 & 11000011 & 141 & 247 & 11110111 & 251 \\
 40 & 00101000 & 14 & 92 & 01011100 & 121 & 144 & 10010000 & 32 & 196 & 11000100 & 77 & 248 & 11111000 & 218 \\
 41 & 00101001 & 55 & 93 & 01011101 & 163 & 145 & 10010001 & 43 & 197 & 11000101 & 118 & 249 & 11111001 & 244 \\
 42 & 00101010 & 81 & 94 & 01011110 & 180 & 146 & 10010010 & 52 & 198 & 11000110 & 151 & 250 & 11111010 & 245 \\
 43 & 00101011 & 107 & 95 & 01011111 & 220 & 147 & 10010011 & 119 & 199 & 11000111 & 204 & 251 & 11111011 & 252 \\
 44 & 00101100 & 59 & 96 & 01100000 & 27 & 148 & 10010100 & 38 & 200 & 11001000 & 48 & 252 & 11111100 & 246 \\
 45 & 00101101 & 106 & 97 & 01100001 & 31 & 149 & 10010101 & 128 & 201 & 11001001 & 113 & 253 & 11111101 & 253 \\
 46 & 00101110 & 142 & 98 & 01100010 & 80 & 150 & 10010110 & 123 & 202 & 11001010 & 143 & 254 & 11111110 & 254 \\
 47 & 00101111 & 183 & 99 & 01100011 & 109 & 151 & 10010111 & 176 & 203 & 11001011 & 205 & 255 & 11111111 & 255 \\
 48 & 00110000 & 13 & 100 & 01100100 & 74 & 152 & 10011000 & 33 & 204 & 11001100 & 122 \\
 49 & 00110001 & 42 & 101 & 01100101 & 134 & 153 & 10011001 & 139 & 205 & 11001101 & 172 \\
 50 & 00110010 & 57 & 102 & 01100110 & 159 & 154 & 10011010 & 160 & 206 & 11001110 & 206 \\
 51 & 00110011 & 129 & 103 & 01100111 & 188 & 155 & 10011011 & 193 & 207 & 11001111 & 232  
    \end{tabular}\smallskip
  \caption{\label{val2} The valuation of agent two.}
\end{table}

\section{Further Theoretical Aspects of EFX} \label{sec:addcons}

We briefly discuss ideas that did not eventually make it into the final CNF encoding, but
might be valuable for different instances of EFX, e.g., considering the case of four agents.

  \paragraph{Singletons:} Many allocations, in which agent 0 is assigned a singleton, are guaranteed not to be EFX. Assume agent 0 is allocated a singleton. If this item is $m - 3$ or less, the allocation is not EFX. If the item is $m -2$ and $m-1$ is not in a singleton, the allocation is not EFX. For seven goods, this reasoning excludes 380 out of the 1806 allocations.

\paragraph{Breaking it into Pieces:}
The driving quantifier for the eventual size of the formula is the allocation quantifier. It can be broken into
pieces. If $X_0$, $X_1$, $X_2$ is an allocation, then there are six possible orderings of their cardinalities that could be considered separately. This could be further refined by considering all possible combinations of  cardinalities, e.g., for $m = 7$:
$(5,1,1)$, $(4,2,1)$, $(4,1,2)$, $(3,2,2)$, $(3,3,1)$, $(3,1,3)$, $(2,1,4)$, $(2,4,1)$, $(2,2,3)$, $(2,3,2)$, $(1,1,5)$, $(1,5,1)$, $(1,3,3)$, $(1,4,2)$, $(1,2,4)$. 

\paragraph{Breaking Symmetry Further by Restricting Valuations:}

We have restricted the valuation of agent zero by ordering the items. We can still interchange the valuations of agents one and two if we also interchange the bundles assigned to them. Alternatively, we could define an order on evaluations and postulate that the valuation of agent one precedes or is equal to the valuation of agent two. In this way, we could halve the search space at the expense of an increased complexity of the formula. We have not explored this possibility further.

\newcommand{\EC}{\mathit{EC}}
\newcommand{\eq}{\mathit{eq}}
\newcommand{\less}{\mathit{le}}

\paragraph{No Envy-Cycles:}\label{restrictive}

Consider the following directed \emph{envy-graph}, whose vertices are the agents. We have an edge from agent $i$ to agent $j$ if $i$ prefers the bundle assigned to $j$ over the bundle assigned to them. The edges of the graph are called \emph{envy} edges. It is easy to see that if there is an EFX allocation, then there is an EFX allocation without an envy-cycle. One simply reassigns the bundles along any envy-cycle.

\section{The Program Checking the Counterexample}\label{sec: program}

The following program checks the counterexample. 
{  \scriptsize







\tiny

\nwfilename{./CheckCounterExampleProgramOnly.nw}
\nwbegincode{1}
#include <cassert>\nwcodepenalty=\Lhighpen
#include <math.h>\nwcodepenalty=\Lhighpen
#include <iostream>\nwcodepenalty=\Llowpen
#include <fstream>\nwcodepenalty=\Llowpen
\vspace{\Lemptyline}int m = 8;\nwcodepenalty=\Llowpen
int P = 256;\nwcodepenalty=\Llowpen
int k = m - 2;\nwcodepenalty=\Llowpen
int val[3][256];\nwcodepenalty=\Llowpen
\vspace{\Lemptyline}int pow(int i)\nwcodepenalty=\Llowpen
\{ assert (0 <= i && i < 8);\nwcodepenalty=\Llowpen
  if (i == 0) return 1;\nwcodepenalty=\Llowpen
  if (i == 1) return 2;\nwcodepenalty=\Llowpen
  if (i == 2) return 4;\nwcodepenalty=\Llowpen
  if (i == 3) return 8;\nwcodepenalty=\Llowpen
  if (i == 4) return 16;\nwcodepenalty=\Llowpen
  if (i == 5) return 32;\nwcodepenalty=\Llowpen
  if (i == 6) return 64;\nwcodepenalty=\Llowpen
  if (i == 7) return 128;\nwcodepenalty=\Llowpen
  return 256;\nwcodepenalty=\Llowpen
\}\nwcodepenalty=\Llowpen
\vspace{\Lemptyline}bool is_in(int i, int A) // returns true if i is a member of A\nwcodepenalty=\Llowpen
\{ assert (0 <= i && i < m && 0 <= A && A < P);\nwcodepenalty=\Llowpen
  return ((A / pow(i)) 
\}\nwcodepenalty=\Llowpen
\vspace{\Lemptyline}int card(int A)\nwcodepenalty=\Llowpen
\{ assert (0 <= A && A < P);\nwcodepenalty=\Llowpen
  int c = 0;\nwcodepenalty=\Llowpen
  for (int i = 0; i < m; i++)\nwcodepenalty=\Llowpen
    if (is_in(i,A)) c++;\nwcodepenalty=\Llowpen
  return c;\nwcodepenalty=\Llowpen
\}\nwcodepenalty=\Llowpen
\vspace{\Lemptyline}bool subset(int A, int B)\nwcodepenalty=\Llowpen
\{ assert (0 <= A && A < P && 0 <= B && B < P);\nwcodepenalty=\Llowpen
  for (int i = 0; i < m; i++)\nwcodepenalty=\Llowpen
    if ( is_in(i,A) && ! is_in(i,B) ) return false;\nwcodepenalty=\Llowpen
  return true;\nwcodepenalty=\Llowpen
\}\nwcodepenalty=\Llowpen
\vspace{\Lemptyline}bool is_monotone(int i)\nwcodepenalty=\Llowpen
\{ assert (0 <= i && i <= 2);\nwcodepenalty=\Llowpen
  for (int A = 0; A < P; A++)\nwcodepenalty=\Llowpen
    for (int B = 0; B < P; B++)\nwcodepenalty=\Llowpen
      \{ if (A == B) continue;\nwcodepenalty=\Llowpen
        if ( !subset(A,B) || val[i][A] < val[i][B] ) continue;\nwcodepenalty=\Llowpen
        return false;\nwcodepenalty=\Llowpen
      \}\nwcodepenalty=\Llowpen
  return true;\nwcodepenalty=\Llowpen
\}\nwcodepenalty=\Llowpen
\vspace{\Lemptyline}bool is_EFX(int v, int A, int B, int C)\nwcodepenalty=\Llowpen
// checks whether agent v with set A strongly envies B or C\nwcodepenalty=\Llowpen
\{ assert(0 <= v && v <= 2 && 0 <= A && A < 256 && 0 <= B && B < 256\nwcodepenalty=\Llowpen
                                                  && 0 <= C && C < 256);\nwcodepenalty=\Llowpen
  for (int g = 0; g < m; g++)\{\nwcodepenalty=\Llowpen
    if (is_in(g,B) && val[v][B-pow(g)] > val[v][A])\{\nwcodepenalty=\Llowpen
      return false;\nwcodepenalty=\Llowpen
    \}\nwcodepenalty=\Llowpen
    if (is_in(g,C)  && val[v][C - pow(g)] > val[v][A])\{\nwcodepenalty=\Llowpen
      return false;\nwcodepenalty=\Llowpen
    \}           \nwcodepenalty=\Llowpen
  \}\nwcodepenalty=\Llowpen
  return true;\nwcodepenalty=\Llowpen
\}\nwcodepenalty=\Llowpen
\vspace{\Lemptyline}bool is_EFX(int A, int B, int C)\nwcodepenalty=\Llowpen
// checks whether (A,B,C) is EFX: O owns A, 1 owns B, 2 owns C\nwcodepenalty=\Llowpen
\{ assert(0 <= A && A < P  && 0 <= B && B < P  && 0 <= C && C < P);\nwcodepenalty=\Llowpen
  if (!is_EFX(0,A,B,C) || !is_EFX(1,B,A,C) || !is_EFX(2,C,A,B)) return false;\nwcodepenalty=\Llowpen
  return true;\nwcodepenalty=\Llowpen
\}\nwcodepenalty=\Llowpen
\vspace{\Lemptyline}bool is_EFX()\nwcodepenalty=\Llowpen
/* checks whether there is an EFX-allocation by running over all\nwcodepenalty=\Llowpen
triples A, B, C with 0 <= A,B,C < 256. First checks that the three sets\nwcodepenalty=\Llowpen
form a partition and then checks EFX */ \nwcodepenalty=\Llowpen
\{ for (int A = 0; A < P; A++)\nwcodepenalty=\Llowpen
    for (int B = 0; B < P; B++)\nwcodepenalty=\Llowpen
      for (int C = 0; C < P; C++)\nwcodepenalty=\Llowpen
        \{ bool good_values = (card(A) > 0) && (card(B) > 0) && (card(C) > 0);\nwcodepenalty=\Llowpen
          if (! good_values ) continue;\nwcodepenalty=\Llowpen
          for (int i = 0; i < m; i++)\nwcodepenalty=\Llowpen
            \{ int c = is_in(i,A) + is_in(i,B) + is_in(i,C);\nwcodepenalty=\Llowpen
              if ( c == 0 || c >= 2) good_values = false;\nwcodepenalty=\Llowpen
            \}\nwcodepenalty=\Llowpen
          if (good_values && is_EFX(A,B,C))  \nwcodepenalty=\Llowpen
            return true;\nwcodepenalty=\Llowpen
        \}\nwcodepenalty=\Llowpen
  return false; \nwcodepenalty=\Llowpen
\}\nwcodepenalty=\Llowpen
\vspace{\Lemptyline}int main()\{\nwcodepenalty=\Llowpen
  std::ifstream myinput("ThreeVals.txt");\nwcodepenalty=\Llowpen
  // each line has the form: set as a number, set as a bitstring, rank;\nwcodepenalty=\Llowpen
  // the ranks are increasing\nwcodepenalty=\Llowpen
  for (int i = 0; i < 3; i++)\{\nwcodepenalty=\Llowpen
    for (int r = 0; r < 256; r++)\{\nwcodepenalty=\Llowpen
        int s; int k; int w;\nwcodepenalty=\Llowpen
        myinput >> s; myinput >> k; myinput >> w;\nwcodepenalty=\Llowpen
        assert (w == r);\nwcodepenalty=\Llowpen
        val[i][s] = r;\nwcodepenalty=\Llowpen
    \}\nwcodepenalty=\Llowpen
    std::cout << "\\nval " << i << (is_monotone(i)?  " is monotone" : " is not monotone");\nwcodepenalty=\Llowpen
         // check monotonicity of val(i,*)\nwcodepenalty=\Llowpen
  \}\nwcodepenalty=\Llowpen
  myinput.close();\nwcodepenalty=\Llowpen
  std::cout << (is_EFX()? "\\nfound an EFX allocation\\n" : "\\nno EFX allocation\\n");\nwcodepenalty=\Llowpen
        // check all possible allocations for EFX-property \nwcodepenalty=\Lhighpen
  return 0;\nwcodepenalty=\Lhighpen
\}\nwcodepenalty=\Llowpen
\nwendcode{}\nwbegindocs{2}\nwdocspar
%
%
%
%
%
%
%
%
%
%
%

\nwenddocs{}
  }


\end{document}